\numberwithin{equation}{section}
\newtheorem{thm}{Theorem}
\newtheorem{prop}{Proposition}
\newcommand{\vect}[1]{\boldsymbol{#1}}
\renewcommand{\bar}{\overline}
\newcommand{\eps}{\epsilon}
\newcommand{\pa}{\partial}
\renewcommand{\eps}{\varepsilon}
\renewcommand{\epsilon}{\varepsilon}
\renewcommand{\Sigma}{\varSigma}
\newcommand{\E}{\textrm E}
\begin{document}
%
\title{Wormhole Hamiltonian Monte Carlo}
\author{Shiwei Lan\footnote{Department of Statistics, University of California,
Irvine, USA.}\\ \texttt{slan@uci.edu} \and Jeffrey Streets\footnote{epartment of
Mathematics, University of California, Irvine, USA}\\ \texttt{jstreets@uci.edu}
\and Babak Shahbaba\footnote{Department of Statistics and Department of Computer
Science, University of California, Irvine, USA.}\\ \texttt{babaks@uci.edu}}
\maketitle
\begin{abstract}
\begin{quote}
In machine learning and statistics, probabilistic inference involving multimodal distributions is quite difficult. This is especially true in high dimensional problems, where most existing algorithms cannot easily move from one mode to another. To address this issue, we propose a novel Bayesian inference approach based on Markov Chain Monte Carlo. Our method can effectively sample from multimodal distributions, especially when the dimension is high and the modes are isolated. To this end, it exploits and modifies the Riemannian geometric properties of the target distribution to create \emph{wormholes} connecting modes in order to facilitate moving between them. Further, our proposed method uses the regeneration technique in order to adapt the algorithm by identifying new modes and updating the network of wormholes without affecting the stationary distribution. To find new modes, as opposed to rediscovering those previously identified, we employ a novel mode searching algorithm that explores a \emph{residual energy} function obtained by subtracting an approximate Gaussian mixture density (based on previously discovered modes) from the target density function. 
\end{quote}
\end{abstract}

\section{Introduction}
In Bayesian inference, it is well known that standard Markov Chain Monte Carlo (MCMC) methods tend to fail when the target distribution is
multimodal \citep{neal93, neal96a, celeux00, neal01, rudoy06, sminchisescu11, craiu09}. These methods typically fail to move from one mode to another
since such moves require passing through low probability regions. This is especially true for high dimensional problems with isolated modes.
Therefore, despite recent advances in computational Bayesian methods, designing effective MCMC samplers for multimodal distribution has remained a
major challenge. In the statistics and machine learning literature, many methods have been proposed address this issue \citep[see for example,][]{neal96a, neal01, warnes01, laskey03, hinton04, braak06, rudoy06, sminchisescu11, ahn13} However, these methods tend to suffer from the curse
of dimensionality \citep{hinton04, ahn13}.

In this paper, we propose a new algorithm, which exploits and modifies the Riemannian geometric properties of the target distribution to create
wormholes connecting modes in order to facilitate moving between them. Our method can be regarded as an extension of Hamiltonian Monte Carlo (HMC).
Compared to random walk Metropolis, standard HMC explores the target distribution more efficiently by exploiting its geometric properties. However, it
too tends to fail when the target distribution is multimodal since the modes are separated by high energy barriers (low probability regions)
\citep{sminchisescu11}.

In what follows, we provide an brief overview of HMC. Then, we introduce our method assuming that the locations of the modes are known (either exactly or approximately), possibly through some optimization techniques (e.g., \citep{kirkpatrick83, sminchisescu02}). Next, we relax this assumption by incorporating a mode searching algorithm in our method in order to identify new modes and to update the network of wormholes.

\section{Preliminaries}

Hamiltonian Monte Carlo (HMC) \citep{duane87,neal10} is a Metropolis algorithm with proposals guided by Hamiltonian dynamics. HMC improves upon random
walk Metropolis by proposing states that are distant from the current state, but nevertheless have a high probability of acceptance. These distant
proposals are found by numerically simulating Hamiltonian dynamics, whose
state space consists of its \emph{position}, denoted by the vector $\boldsymbol\theta$, and its \emph{momentum}, denoted by a vector $\boldsymbol p$.
Our objective is to sample from the distribution of $\boldsymbol\theta$ with the probability density function (up to some constant)
$\pi(\boldsymbol\theta)$. We usually assume that the auxiliary momentum variable $\boldsymbol p$ has a multivariate normal distribution (the same
dimension as $\boldsymbol\theta$) with mean zero. The covariance of $\boldsymbol p$ is usually referred to as the \emph{mass matrix}, $\boldsymbol M$,
which in standard HMC is usually set to the identity matrix, ${\boldsymbol I}$, for convenience.

Based on $\boldsymbol \theta$ and $\boldsymbol p$, we define the \emph{potential energy}, $U(\boldsymbol\theta)$, and the \emph{kinetic energy},
$K(\boldsymbol p)$. We set $U(\boldsymbol \theta)$ to minus the log probability density of $\boldsymbol \theta$ (plus any constant). For the auxiliary
momentum variable $\boldsymbol p$, we set $K(\boldsymbol p)$ to be minus the log probability density of $\boldsymbol p$ (plus any constant). The
\emph{Hamiltonian} function is then defined as follows:
\begin{eqnarray*}\label{hamiltonian}
H(\boldsymbol \theta,\boldsymbol p) & = & U(\boldsymbol\theta) + K(\boldsymbol p)
\end{eqnarray*}
The partial derivatives of $H(\boldsymbol \theta, \boldsymbol p)$ determine how $\boldsymbol \theta$ and $\boldsymbol p$ change over time, according
to \emph{Hamilton's equations},
\begin{eqnarray}\begin{array}{lcrcr }
\displaystyle 
\dot {\boldsymbol\theta} & = & \nabla_{\boldsymbol p} H({\boldsymbol\theta}, {\boldsymbol p}) & = & {\boldsymbol M}^{-1}{\boldsymbol p} \\ [12pt]
\displaystyle 
\dot {\boldsymbol p} & = & -\nabla_{\boldsymbol\theta} H({\boldsymbol\theta}, {\boldsymbol p})& =  & -\nabla_{\boldsymbol\theta} U(\boldsymbol\theta) 
\end{array}\label{rmhd}\end{eqnarray}
Note that ${\boldsymbol M}^{-1}{\boldsymbol p}$ can be interpreted as velocity. 

In practice, solving Hamiltonian's equations exactly is difficult, so we need to approximate these equations by discretizing time, using some small
step size $e$. For this purpose, the \emph{leapfrog} method is commonly used. We can use some number, $L$, of these leapfrog steps, with some
stepsize, $e$, to propose a new state in the Metropolis algorithm. This proposal is either accepted or rejected based on the Metropolis acceptance probability. 

While HMC explores the target distribution more efficiently than random walk Metropolis, it does not fully exploits its geometric properties. Recently, \cite{girolami11} proposed a new method, called Riemannian Manifold HMC (RMHMC), that improvs the efficiency of standard HMC by automatically adapting to the local structure. To this end, they follow \cite{amari00} and propose
Hamiltonian Monte Carlo methods defined on the Riemannian manifold endowed with metric ${\boldsymbol G}_{0}({\boldsymbol\theta})$, which is set to the Fisher information matrix. More specifically, they define Hamiltonian dynamics in terms of a position-specific mass matrix, $\boldsymbol M$, set to ${\boldsymbol G}_{0}({\boldsymbol\theta})$. The standard HMC method is a special case of RMHMC with ${\boldsymbol G}_{0}({\boldsymbol\theta}) =
{\boldsymbol I}$. Here, we use the notation ${\boldsymbol G}_{0}$ to generally refer to a Riemannian metric, which is not necessarily the Fisher information. In the following section, we introduce a natural modification of ${\boldsymbol G}_{0}$ such that the associated Hamiltonian dynamical system has a much greater chance of moving between isolated modes.

\section{Wormhole Hamiltonian Monte Carlo}

Consider a manifold $\mathcal M$ endowed with a generic metric ${\boldsymbol G}_{0}({\boldsymbol\theta})$. Given a differentiable curve
$\boldsymbol{\theta}(t) :
[0,T] \to \mathcal M$ one can define the arclength along this curve as
  \begin{equation}\label{arcL}
  \ell({\boldsymbol\theta}) := \int_0^T \sqrt{
  \dot{\boldsymbol\theta}(t)^{\textsf T} {\boldsymbol G}_{0}({\boldsymbol\theta}(t))
  \dot{\boldsymbol\theta}(t)} dt
  \end{equation}
Under very general geometric assumptions, which are nearly always satisfied in statistical models, given any two points ${\boldsymbol\theta}_1,
{\boldsymbol\theta}_2 \in \mathcal M$ there exists a curve ${\boldsymbol\theta}(t):[0, T]\to \mathcal M$ satisfying the boundary conditions
${\boldsymbol\theta}(0) = {\boldsymbol\theta}_1, {\boldsymbol\theta}(T) = {\boldsymbol\theta}_2$ whose arclength is minimal among such curves.  The
length of such a minimal curve defines a distance function on $\mathcal M$.  In Euclidean space, where ${\boldsymbol G}_{0}({\boldsymbol\theta})\equiv
{\boldsymbol I}$, the shortest curve connecting ${\boldsymbol\theta}_1$ and ${\boldsymbol\theta}_2$ is simply a straight line with the Euclidean
length $\Vert {\boldsymbol\theta}_1-{\boldsymbol\theta}_2\Vert_2$.

As mentioned above, while standard HMC algorithms explore the target distribution more efficiently, they nevertheless fail to move between isolated
modes since these modes are separated by high energy barriers \citep{sminchisescu11}. To address this issue, we propose to replace the base metric
${\boldsymbol G}_{0}$ with a new metric for which the distance between modes is shortened. This way, we can facilitate moving between modes by
creating ``wormholes'' between them.

Let $\hat{\boldsymbol\theta}_1$ and $\hat{\boldsymbol\theta}_2$ be two modes of the target distribution. We define a straight line segment, ${\boldsymbol
v}_W:= \hat{\boldsymbol\theta}_2-\hat{\boldsymbol\theta}_1$, and refer to a small neighborhood (tube) of the line segment as a \emph{wormhole}. Next, we
define a \emph{wormhole metric}, ${\boldsymbol G_W}({\boldsymbol\theta})$, in the vicinity of the wormhole. The metric ${\boldsymbol
G_W}({\boldsymbol\theta})$ is an inner product assigning a non-negative real number to a pair of tangent vectors ${\boldsymbol u}, {\boldsymbol w}$:
${\boldsymbol G_W}({\boldsymbol\theta})({\boldsymbol u}, {\boldsymbol w})\in \mathbb R^+$. To shorten the distance in the direction of ${\boldsymbol v}_W$, we project both ${\boldsymbol u}, {\boldsymbol w}$ to the plane normal to ${\boldsymbol v}_W$ and then take the Euclidean inner product of
those projected vectors. We set ${\boldsymbol v}_W^*={\boldsymbol v}_W/\Vert {\boldsymbol v}_W\Vert$ and
define a \emph{pseudo wormhole metric} ${\boldsymbol G^*_W}$ as follows:
\begin{eqnarray*}\label{psuNwormhole}
{\boldsymbol G^*_W}({\boldsymbol u}, {\boldsymbol w}) & := & \langle {\boldsymbol u}-\langle {\boldsymbol u},{\boldsymbol v}_W^*\rangle {\boldsymbol v}_W^*, {\boldsymbol w}-\langle {\boldsymbol w},{\boldsymbol v}_W^*\rangle {\boldsymbol v}_W^* \rangle \\
 & = & {\boldsymbol u}^{\textsf T} [{\boldsymbol I} - {\boldsymbol v}_W^*({\boldsymbol v}_W^*)^{\textsf T}] {\boldsymbol w}
\end{eqnarray*}
Note that ${\boldsymbol G^*_W}:= {\boldsymbol I} - {\boldsymbol v}_W^*({\boldsymbol v}_W^*)^{\textsf T}$ is semi-positive
definite (degenerate at ${\boldsymbol v}_W^*\neq 0$). We modify this metric to make it positive definite, and define the \emph{wormhole
metric} ${\boldsymbol G_W}$ as follows:
\begin{equation}\label{Nwormhole}
{\boldsymbol G_W} = {\boldsymbol G^*_W} +\eps {\boldsymbol v}_W^*({\boldsymbol v}_W^*)^{\textsf T} =  {\boldsymbol I} - (1-\eps){\boldsymbol v}_W^*({\boldsymbol v}_W^*)^{\textsf T}
\end{equation}
where $0 < \epsilon \ll 1$ is a small positive number.

To see that the wormhole metric ${\boldsymbol G_W}$ in fact shortens the distance between $\hat{\boldsymbol\theta}_1$ and $\hat{\boldsymbol\theta}_2$,
consider a simple case where ${\boldsymbol\theta}(t)$ follows a straight line: ${\boldsymbol\theta}(t) = {\boldsymbol\theta}_1+ {\boldsymbol v}_Wt,
t\in [0,1]$.
In this case, the distance under ${\boldsymbol G_W}$ is 
\begin{equation*}
  \mathrm{dist}(\hat{\boldsymbol\theta}_1, \hat{\boldsymbol\theta}_2) =
  \int_0^1 \sqrt{ {\boldsymbol v}_W^{\textsf T} {\boldsymbol G_W} {\boldsymbol v}_W }dt
  = \sqrt\eps \Vert {\boldsymbol v}_W\Vert \ll \Vert {\boldsymbol v}_W\Vert
\end{equation*}
which is much smaller than the Euclidean distance.

Next, we define the overall metric, $\boldsymbol G$, for the whole parameter space of $\boldsymbol \theta$ as a weighted sum of the base metric
${\boldsymbol G_0}$ and the wormhole metric ${\boldsymbol G_W}$,
\begin{equation}\label{finalM}
{\boldsymbol G}({\boldsymbol\theta}) = (1-\mathfrak m({\boldsymbol\theta})) {\boldsymbol G_0}({\boldsymbol\theta}) + \mathfrak m({\boldsymbol\theta}) {\boldsymbol G_W}
\end{equation}
where $\mathfrak m({\boldsymbol\theta})\in(0,1)$ is a mollifying function designed to make the wormhole metric ${\boldsymbol G_W}$ influential in the
vicinity of the wormhole only. In this paper, we choose the following mollifier:
\begin{equation}\label{mollNT}
\mathfrak m({\boldsymbol\theta}) := \exp \{ -(\Vert {\boldsymbol\theta} - \hat{\boldsymbol\theta}_1\Vert + \Vert{\boldsymbol\theta} - \hat{\boldsymbol\theta}_2\Vert - \Vert\hat{\boldsymbol\theta}_1 - \hat{\boldsymbol\theta}_2\Vert) /F \}
\end{equation}
where the \emph{influence factor} $F>0$, is a free parameter that can be tuned to modify the extent of the influence of ${\boldsymbol G_W}$:
decreasing $F$ makes the influence of ${\boldsymbol G_W}$ more restricted around the wormhole. The resulting metric leaves the base metric almost intact
outside of the wormhole, while making the transition of the metric from outside to inside smooth. Within the wormhole, the trajectories are mainly guided
in the wormhole direction ${\boldsymbol v}_W^*$: ${\boldsymbol G}({\boldsymbol\theta}) \approx {\boldsymbol G_W}$, so ${\boldsymbol
G}({\boldsymbol\theta})^{-1} \approx {\boldsymbol G_W}^{-1}$ has the dominant eigen-vector ${\boldsymbol v}_W^*$ (with eigen-value
$1/\eps\gg 1$), thereafter ${\boldsymbol v}\sim \mathcal N({\bf 0},{\boldsymbol G}({\boldsymbol\theta})^{-1})$ tends to be directed in ${\boldsymbol v}_W^*$.

\begin{figure}[t]
\begin{center}
 \centerline{
\includegraphics[width=5.5in, height=2in]{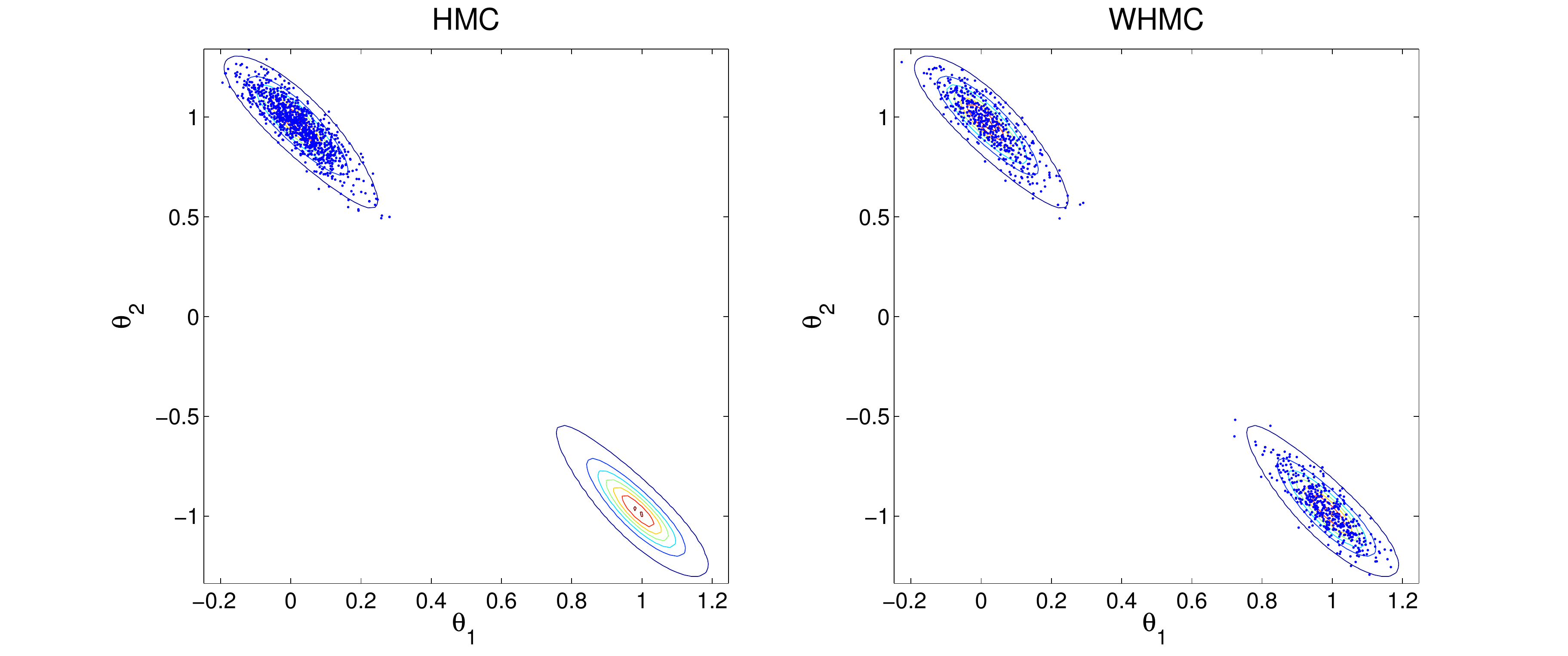}
}
\caption{Comparing HMC and WHMC in terms of sampling from a two-dimensional posterior distribution with two isolated modes.}
\vspace{-18pt}
\label{2mode}
\end{center}
 \end{figure}


We use the modified metric \eqref{finalM} in RMHMC and refer to the resulting algorithm as Wormhole Hamiltonian Monte Carlo (WHMC). Figure \ref{2mode} compares WHMC to standard HMC based on the following illustrative example appeared in the paper by \cite{welling11}:
\begin{eqnarray*}
\theta_d & \sim & \mathcal N(\theta_d, \sigma_d^2), \quad d=1,2.\\
x_i & \sim & \frac{1}{2} \mathcal N(\theta_1, \sigma_x^2) + \frac{1}{2} \mathcal N(\theta_1+\theta_2, \sigma_x^2).
\end{eqnarray*}
Here, we set $\theta_1=0, \theta_2=1, \sigma_1^2=10, \sigma_2^2=1$, $\sigma_x^2=2$, and generate 1000 data points from the above model. In Figure \ref{2mode}, the dots show the posterior samples of $(\theta_1, \theta_2)$ given the simulated data. While HMC is trapped in one mode, WHMC moves easily between the two modes. For this example, we set ${\boldsymbol G_0} = \boldsymbol I$ to make WHMC comparable to standard HMC. Further, we use $0.03$ and $0.3$ for $\eps$ and $F$ respectively. 

For more than two modes, we can construct a network of wormholes by connecting any two modes with a wormhole. Alternatively, we can create a wormhole between neighboring modes only. In this paper, we define the neighborhood using a \emph{minimal spanning tree} \citep{kleinberg05}.

The above method could suffer from two potential shortcomings in higher dimensions. First, the effect of wormhole metric could diminish fast as the sampler leaves one mode towards another mode. Second, such mechanism, which modifies the
dynamics in the existing parameter space, could interfere with the native HMC dynamics in the neighborhood of a wormhole.

To address the first issue, we add an external vector field to enforce the movement between modes. More specifically, we define a vector field, ${\boldsymbol
f}({\boldsymbol\theta},{\boldsymbol v})$, in terms of the position parameter $\boldsymbol \theta$ and the velocity vector ${\boldsymbol
v}={\boldsymbol G}({\boldsymbol\theta})^{-1}{\boldsymbol p}$ as follows:
\begin{eqnarray*}\label{xf}
{\boldsymbol f}({\boldsymbol\theta}, {\boldsymbol v}) & := & \exp\{-V({\boldsymbol\theta})/(DF)\}U({\boldsymbol\theta}) \langle {\boldsymbol v},{\boldsymbol v}_W^*\rangle {\boldsymbol v}_W^* \\
& = &  \mathfrak m({\boldsymbol\theta}) \langle {\boldsymbol v},{\boldsymbol v}_W^*\rangle {\boldsymbol v}_W^*
\end{eqnarray*}
with mollifier $\mathfrak m({\boldsymbol\theta}):= \exp\{-V({\boldsymbol\theta})/(DF)\}$, where $D$ is the dimension, $F>0$ is the influence factor,
and $V({\boldsymbol\theta})$ is a vicinity function indicating the Euclidean distance from the line segment ${\boldsymbol v}_W$,
\begin{equation}\label{vicinity}
V({\boldsymbol\theta}):= \langle {\boldsymbol\theta}-\hat{\boldsymbol\theta}_1,{\boldsymbol\theta}-\hat{\boldsymbol\theta}_2\rangle + |\langle {\boldsymbol\theta}-\hat{\boldsymbol\theta}_1,{\boldsymbol v}_W^*\rangle||\langle {\boldsymbol\theta}-\hat{\boldsymbol\theta}_2,{\boldsymbol v}_W^*\rangle|
\end{equation}
The resulting vector field has three properties: 1) it is confined to a neighborhood of each wormhole, 2) it enforces the movement along the wormhole, and 3)
its influence diminishes at the end of the wormhole when the sampler reaches another mode.

\begin{figure}[t]
\begin{center}
 \centerline{
\includegraphics[width=5.5in, height=2in]{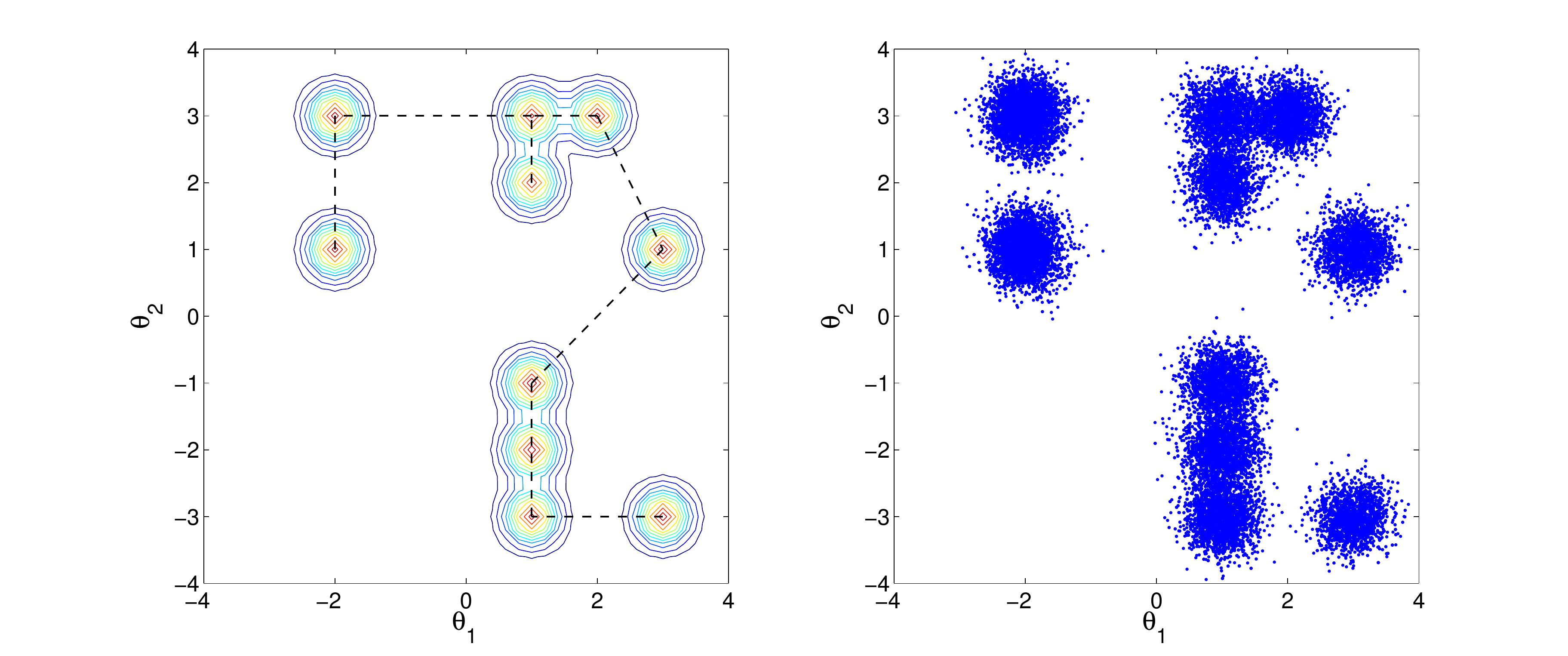}
}
\caption{Sampling from a mixture of 10 Gaussian distributions with dimension $D=100$ using WHMC with a vector field ${\boldsymbol
f}({\boldsymbol\theta}, {\boldsymbol v})$ to enforce moving between modes in higher dimensions.}
\vspace{-18pt}
\label{D100K10}
\end{center}
 \end{figure}

After adding the vector field, we modify the Hamiltonian equation governing the evolution of ${\boldsymbol\theta}$ as follows:
\begin{equation}\label{windeq}
\dot {\boldsymbol\theta} = {\boldsymbol v} + {\boldsymbol f}({\boldsymbol\theta}, {\boldsymbol v})
\end{equation}
We also need to adjust the Metropolis acceptance probability accordingly since the transformation is not volume preserving. (More details are provided in the supplementary file.) Figure \ref{D100K10} illustrates this approach based on sampling from a mixture of 10 Gaussian distributions with dimension $D=100$.

To address the second issue, we allow the wormholes to pass through
an extra auxiliary dimension to avoid their interference with the existing HMC dynamics in the given parameter space. In
particular we introduce an auxiliary variable $\theta_{D+1}\sim \mathcal N(0,1)$ corresponding to an auxiliary dimension. We use
$\tilde{\boldsymbol\theta} := ({\boldsymbol\theta},\theta_{D+1})$ to denote the position parameters in the resulting $D+1$ dimensional space $\mathcal
M^{D}\times \mathbb R$. $\theta_{D+1}$ can be viewed as random noise independent of $\boldsymbol\theta$ and contributes $\frac{1}{2}\theta_{D+1}^2$ to the total potential energy. Correspondingly, we augment velocity ${\boldsymbol v}$ with one extra dimension, denoted as $\tilde{\boldsymbol v}:=({\boldsymbol v},v_{D+1})$. At the end of the sampling, we project $\tilde{\boldsymbol\theta}$ to the original parameter space and discard $\theta_{D+1}$.

We refer to $\mathcal M^{D}\times \{-h\}$ as the \emph{real world}, and call $\mathcal M^{D}\times \{+h\}$ the \emph{mirror world}. Here, $h$ is half of the distance between the two worlds, and it should be in the same scale as the average distance between the modes. For most of the examples discussed here, we set $h=1$. Figure \ref{whpic} illustrates how the two worlds
are connected by networks of wormholes. When the sampler is near a mode $(\hat{\boldsymbol\theta}_1, -h)$ in the real world, we build a wormhole network by connecting it to all the modes in the mirror world. Similarly, we connect the corresponding mode in the mirror world, $(\hat{\boldsymbol\theta}_1, +h)$, to all the modes in the real world. Such construction allows the sampler to jump from one mode in the real world to the same mode in the mirror world and vice versa. This way, the algorithm can effectively sample from the vicinity of a mode, while occasionally jumping from one mode to another.

\begin{figure}[t]
  \begin{center}
    \includegraphics[scale=0.5]{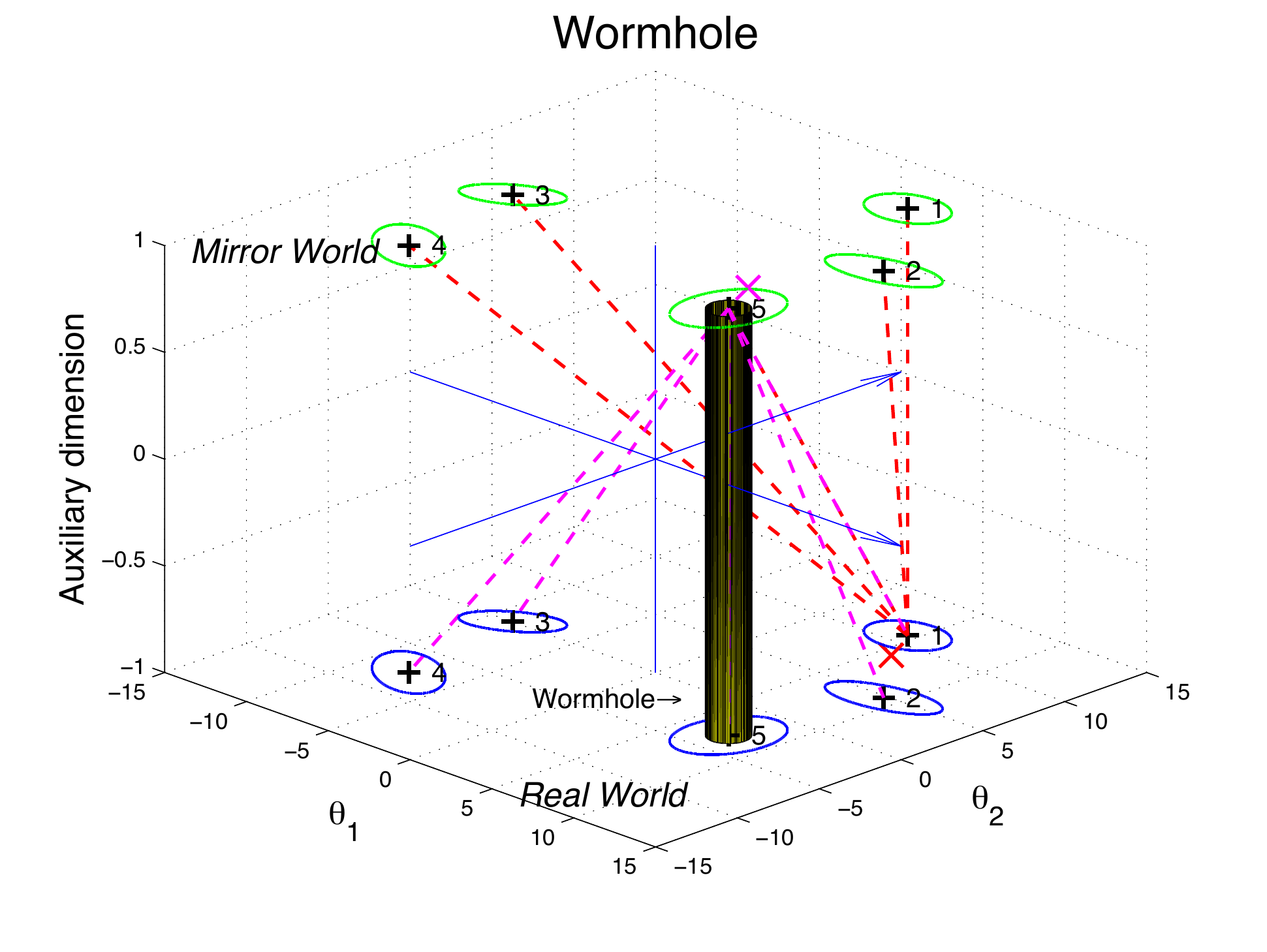}
  \end{center}
  \vspace{-18pt}
  \caption{Illustrating a wormhole network connecting the real world to the mirror world ($h=1$). As an example, the cylinder shows a wormhole
  connecting mode 5 in the real world to its mirror image. The dashed lines show two sets of wormholes. The red lines shows the wormholes when the sampler is close to mode 1 in the real world, and the magenta lines show the wormholes when the sampler is close to mode 5 in the mirror world.}
  \label{whpic}
 \vspace{5pt}
\end{figure}

The attached supplementary file provides the details of our algorithm (Algorithm 1), along with the proof of convergence and its implementation in MATLAB. 

\section{Mode Searching After Regeneration}
So far, we assumed that the locations of modes are known. This is of course not a realistic assumption in many situations. In this section, we relax
this assumption by extending our method to search for new modes proactively and to update the network of wormholes dynamically. In general, however,
allowing such adaptation to take place infinitely often will disturb the stationary distribution of the chain, rendering the process no longer Markov
\citep{gelfand94, gilks98}. To avoid this issue, we use the \emph{regeneration} method discussed by \cite{nummelin84, mykland95, gilks98,
brockwell05}.

Informally, a regenerative process ``starts again'' probabilistically at a set of times, called \emph{regeneration times}
\citep{brockwell05}. At regeneration times, the transition mechanism can be modified based on the entire history of
the chain up to that point without disturbing the consistency of MCMC estimators. 


\subsection{Identifying Regeneration Times}
The main idea behind finding regeneration times is to regard the transition kernel $T(\boldsymbol\theta_{t+1}|\boldsymbol\theta_{t})$ as a mixture of two kernels, $Q$ and $R$ \citep{nummelin84,ahn13},
\begin{equation*}
T(\boldsymbol\theta_{t+1}|\boldsymbol\theta_{t})= S(\boldsymbol\theta_{t})Q(\boldsymbol\theta_{t+1}) +
(1-S(\boldsymbol\theta_{t}))R(\boldsymbol\theta_{t+1}|\boldsymbol\theta_{t})
\end{equation*}
where $Q({\boldsymbol\theta}_{t+1})$ is an \emph{independence kernel}, and the \emph{residual kernel} $R(\boldsymbol\theta_{t+1}|\boldsymbol\theta_{t})$ is defined
as follows:
\begin{equation*}
\!\!R(\boldsymbol\theta_{t+1}|\boldsymbol\theta_{t})\!\!=\!\!
\begin{dcases}
\frac{T(\boldsymbol\theta_{t+1}|\boldsymbol\theta_{t})-S(\boldsymbol\theta_{t})Q(\boldsymbol\theta_{t+1})}{1-S(\boldsymbol\theta_{t})}, &\!\!\!\!
S(\boldsymbol\theta_{t})\in [0,1)\\
1, &\!\!\!\! S(\boldsymbol\theta_{t})=1
\end{dcases}
\end{equation*}
$S(\boldsymbol\theta_{t})$ is the mixing coefficient between the two kernels such that
\begin{equation}\label{Kerineq}
T(\boldsymbol\theta_{t+1}|\boldsymbol\theta_{t}) \ge
S(\boldsymbol\theta_{t})Q(\boldsymbol\theta_{t+1}), \forall \boldsymbol\theta_{t}, \boldsymbol\theta_{t+1}
\end{equation}

Now suppose that at iteration $t$, the current state is $\boldsymbol\theta_{t}$. 
To implement this approach, we first generate $\boldsymbol\theta_{t+1}$ using the original transition kernel
$\boldsymbol\theta_{t+1}|\boldsymbol\theta_{t}\sim T(\cdot|\boldsymbol\theta_{t})$. Then, we sample $B_{t+1}$ from a Bernoulli distribution with
probability
\begin{equation}\label{retroProb}
r(\boldsymbol\theta_{t}, \boldsymbol\theta_{t+1}) = \frac{S(\boldsymbol\theta_{t})Q(\boldsymbol\theta_{t+1})}{T(\boldsymbol\theta_{t+1}|\boldsymbol\theta_{t})}
\end{equation}
If $B_{t+1} = 1$, a regeneration has occurred, then we discard $\boldsymbol\theta_{t+1}$ and sample it from the independence kernel
$\boldsymbol\theta_{t+1}\sim Q(\cdot)$.
At regeneration times, we redefine the dynamics using the past sample path. 

Ideally, we would like to evaluate regeneration times in terms of WHMC's transition kernel. In general, however, this is quite difficult for such Metropolis algorithms. On the other hand, regenerations are easily achieved for the independence sampler (i.e., the proposed state is independent from the current state) as long as the proposal distribution is close to the target distribution \citep{gilks98}. Therefore, we can specify a hybrid sampler that consists of the original proposal distribution (here, WHMC) and the independence sampler, and adapt both proposal distributions whenever a regeneration is obtained on an independence-sampler step \citep{gilks98}. In our method, we systematically alternate between WHMC and the independence sampler while evaluating regeneration times based on the independence sampler only. 

As mentioned above, for this method to be effective, the proposal distribution for the independence sampler should be close to the target distribution. To this end, we follow \cite{ahn13} and specify our independence sampler as a mixture of Gaussians located at the previously identified modes. The covariance matrix for each mixture component is set to the inverse observed Fisher information (i.e., Hessian) evaluated at the mode. The relative weight of each mixture component set to $1/k$ at the beginning, where $k$ is the number of previously identified modes through some initial optimization algorithm. The weights are updated at regeneration times to be proportional to the number of times the corresponding mode has been visited. More specifically, $T(\boldsymbol\theta_{t+1}|\boldsymbol\theta_{t})$, $S(\boldsymbol\theta_{t})$ and $Q(\boldsymbol\theta_{t+1})$ are defined as
follows to satisfy \eqref{Kerineq}:
\begin{align}
T(\boldsymbol\theta_{t+1}|\boldsymbol\theta_{t}) &= q(\boldsymbol\theta_{t+1})
\min\left\{1,\frac{\pi(\boldsymbol\theta_{t+1})/q(\boldsymbol\theta_{t+1})}{\pi(\boldsymbol\theta_{t})/q(\boldsymbol\theta_{t})}\right\}
\label{TSQ:T}\\
S(\boldsymbol\theta_{t}) &= \min\left\{1,\frac{c}{\pi(\boldsymbol\theta_{t})/q(\boldsymbol\theta_{t})}\right\} \label{TSQ:S}\\
Q(\boldsymbol\theta_{t+1}) & = q(\boldsymbol\theta_{t+1}) \min\left\{1,\frac{\pi(\boldsymbol\theta_{t+1})/q(\boldsymbol\theta_{t+1})}{c}\right\}
\label{TSQ:Q}
\end{align}
where $q(\cdot)$ is the independence proposal kernel, which specified using a mixture of Gaussians with means fixed at the $k$ known modes prior to regeneration. Algorithm 2 in the supplementary file shows the steps for this method.

\begin{figure}[t]
 \centerline{ 
\includegraphics[height=2in, width=2in]{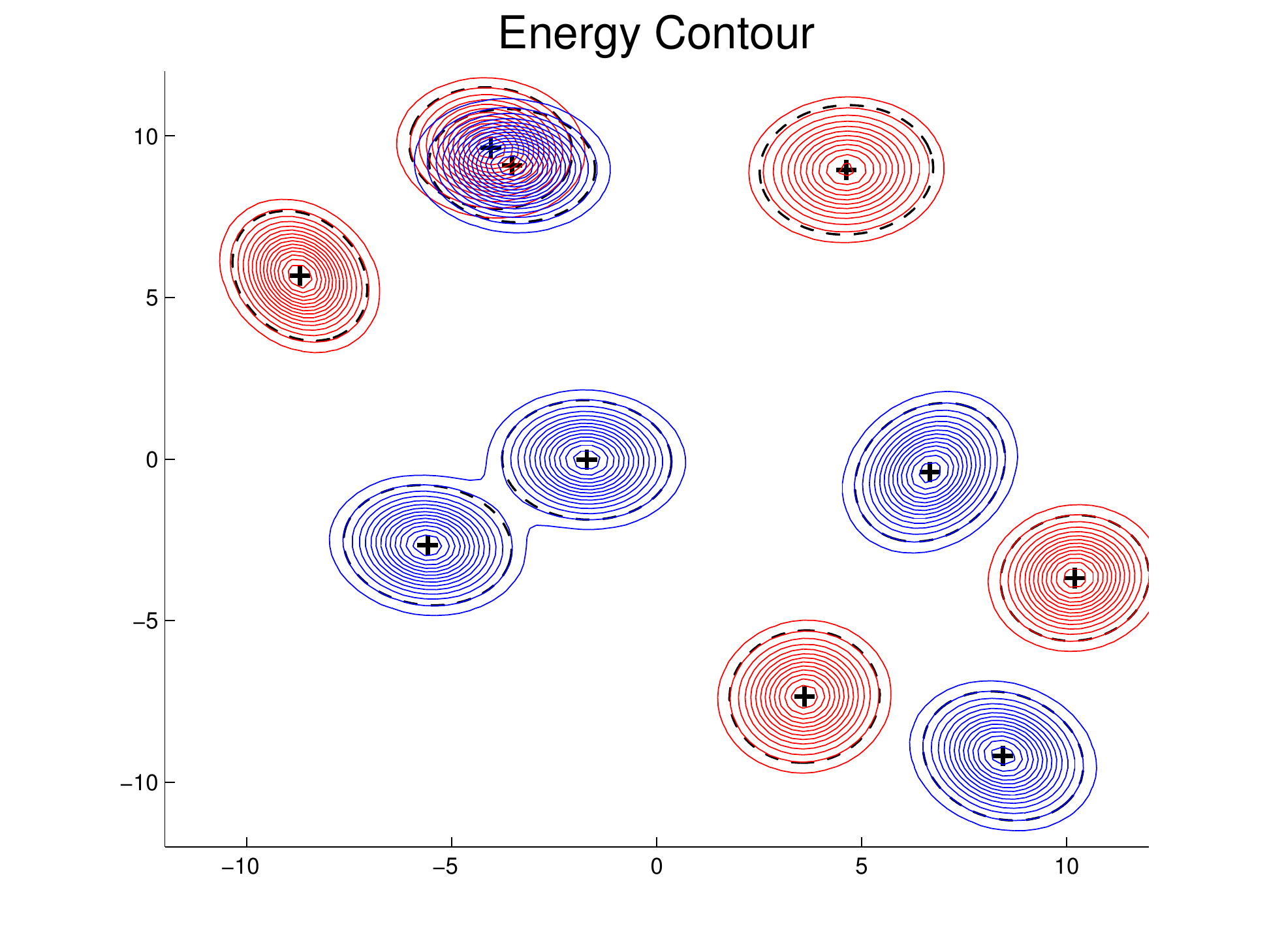}
\includegraphics[height=2in, width=2in]{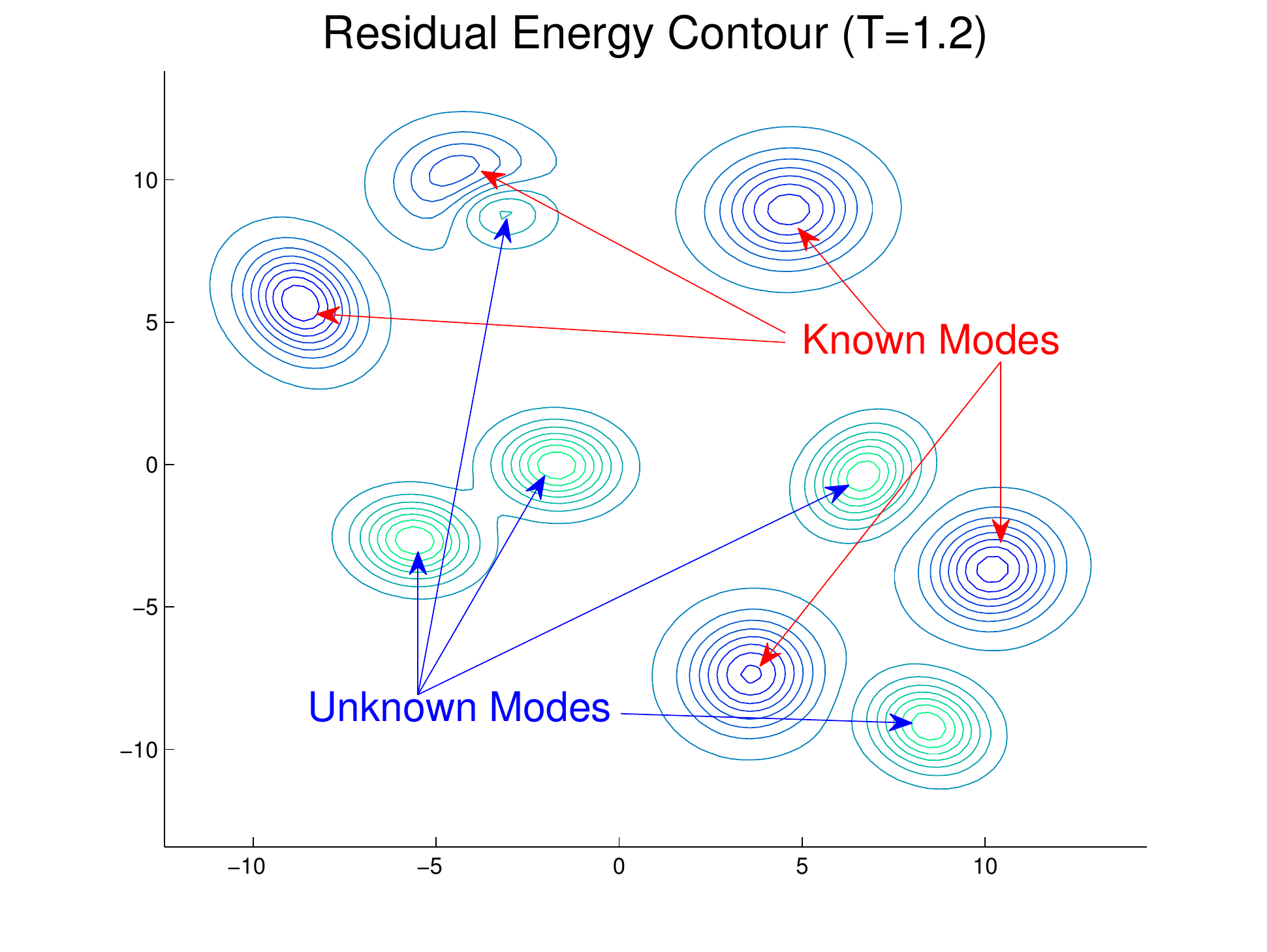}
\includegraphics[height=2in, width=2in]{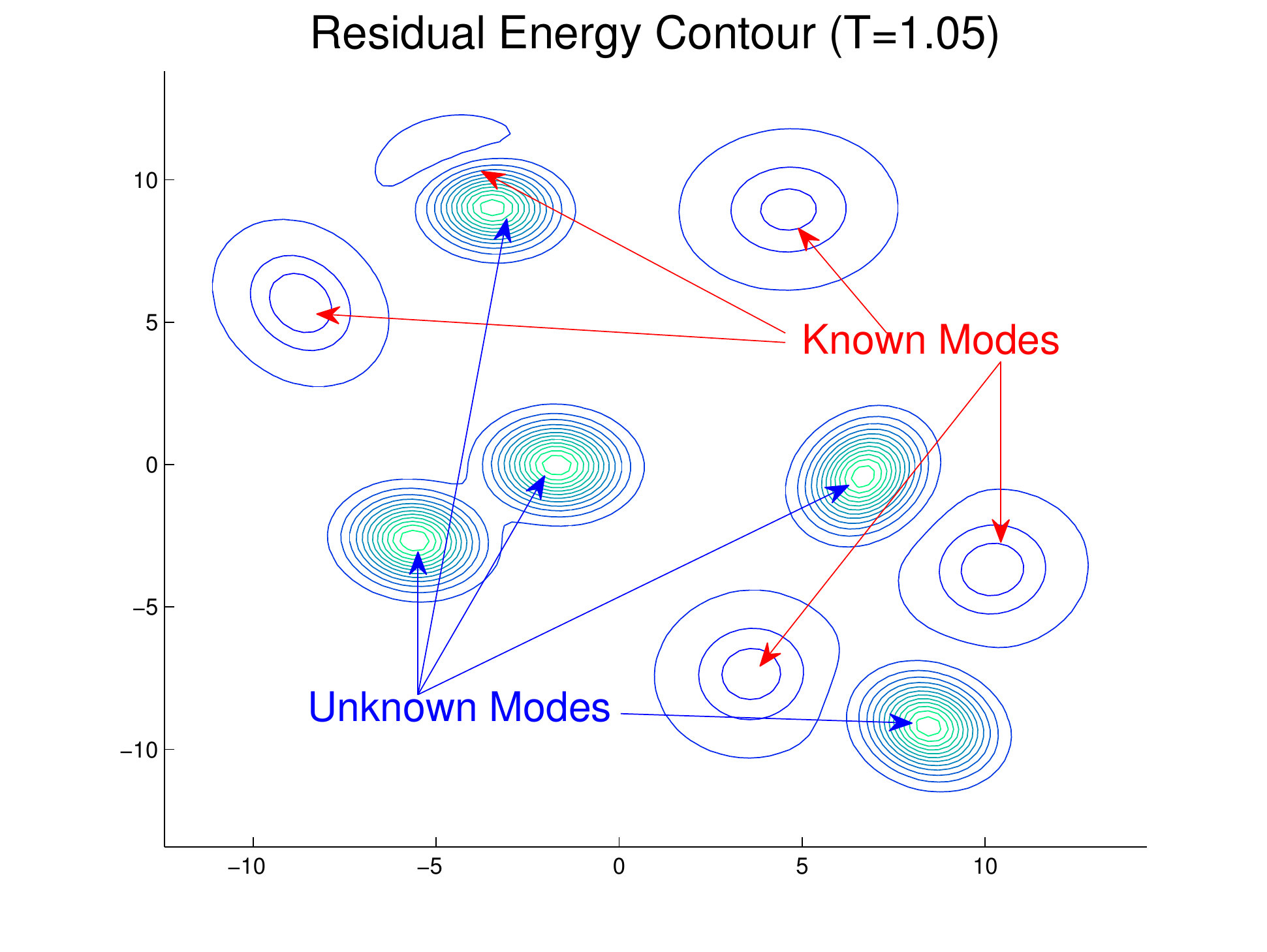}
}
\caption{Left panel: True energy function (red: known modes, blue: unknown modes). Middle panel: Residual energy function at $T=1.2$. Right panel: Residual energy function at $T=1.05$.}
\label{engysurgT}
\end{figure}

\subsection{Identifying New Modes}
When the chain regenerates, we can search for new modes, modify the transition kernel by including newly found modes in the mode library, and update the wormhole network accordingly. This way, starting with a limited number of modes (identified by some preliminary optimization method), WHMC could discover unknown modes on the fly without affecting the stationarity of the chain.

To search for new modes after regeneration, we could simply apply an optimization algorithm to the original target density function $\pi(\boldsymbol\theta)$ with some random starting points. This, however, could lead to frequently rediscovering the known modes. To avoid this issue, we propose to remove/down-weight the known modes using the history of the chain up to the regeneration time and run an optimization
algorithm on the resulting \emph{residual density}, or equivalently, on the corresponding \emph{residual energy} (i.e., minus log of density). To this end, we fit a mixture of Gaussians with the best knowledge of modes (locations, Hessians and relative weights) prior to the regeneration. The \emph{residual density} function could be simply defined as $\pi_{\bf r}(\boldsymbol\theta) = \pi(\boldsymbol\theta)-q(\boldsymbol\theta)$ with
the corresponding \emph{residual potential energy} as follows,
\begin{equation*}
U_{\bf r}(\boldsymbol\theta) = \log(\pi_{\bf r}(\boldsymbol\theta)+c)= -\log(\pi(\boldsymbol\theta)-q(\boldsymbol\theta)+c)
\end{equation*}
where the constant $c>0$ is used to make the term inside the log function positive. To avoid completely flat regions (e.g., when a Gaussian distribution provides a good approximation around the mode), which could cause gradient-based optimization methods to fail, we could use the following \emph{tempered residual potential energy} instead:
\begin{equation*}\label{tempresU}
U_{\bf r}(\boldsymbol\theta,T) = -\log \left(\pi(\boldsymbol\theta)-\exp\left(\frac{1}{T}\log q(\boldsymbol\theta)\right)+c\right)
\end{equation*}
where $T$ is the temperature. Figure \ref{engysurgT} illustrates this concept. 

When the optimizer finds new modes, they are added to the existing
mode library, and the wormhole network is updated accordingly.

\begin{figure}[t]
\begin{center}
 \centerline{
\includegraphics[height=2in, width=2.4in]{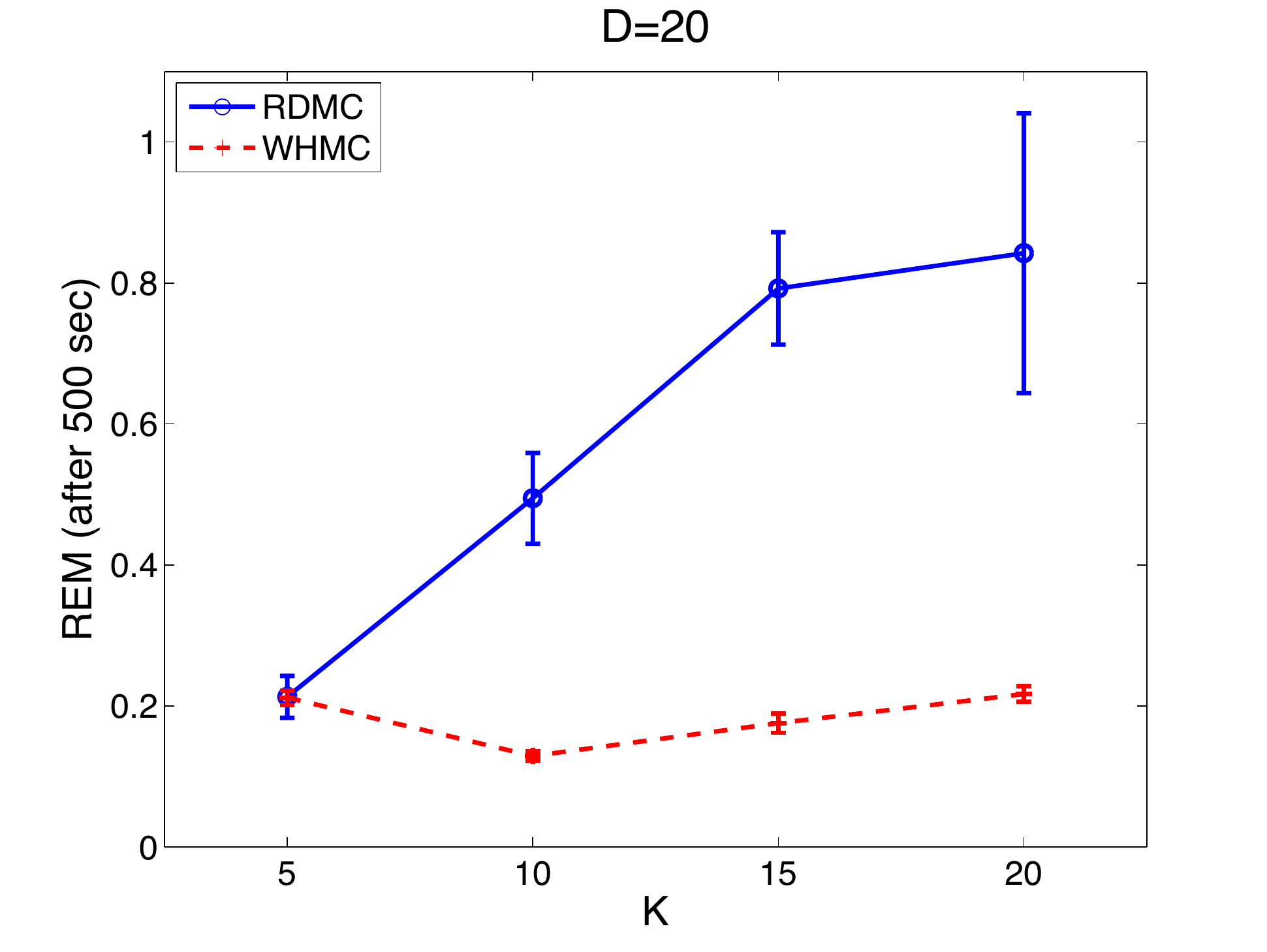}
\includegraphics[height=2in, width=2.4in]{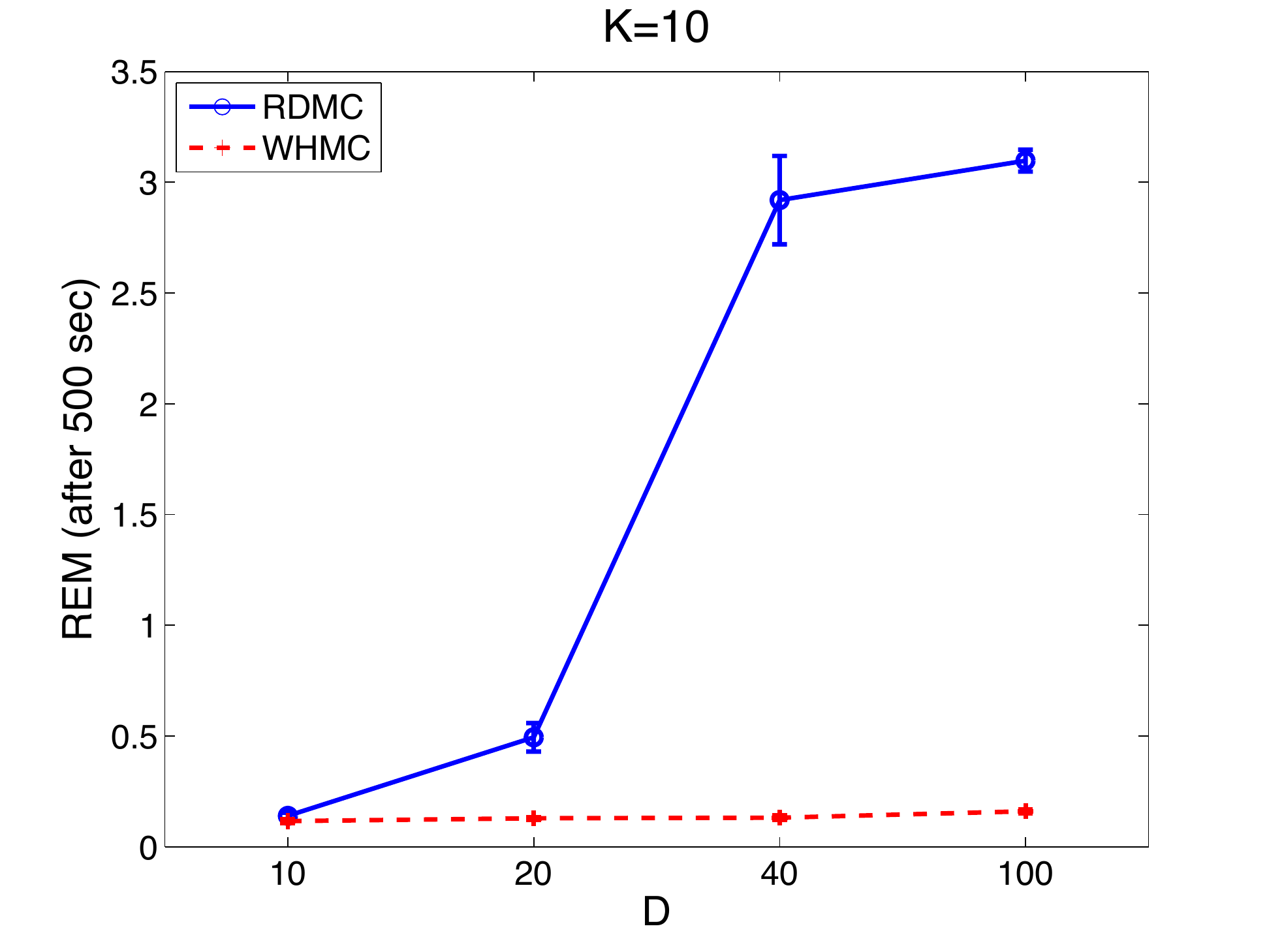}
}
\caption{Comparing WHMC to RDMC using $K$ mixtures of $D$-dimensional Gaussians. Left panel: REM (along with 95\% confidence interval based on 10 MCMC
chains) for varying number of mixture components, $K=5, 10, 15, 20$, with fixed dimension, $D=20$. Right panel: REM (along with 95\% confidence
interval based on 10 MCMC chains) for varying number of dimensions, $D=10, 20, 40, 100$, with fixed number of mixture components, $K=10$.}
\vspace{-18pt}
\label{KDplot}
\end{center}
\end{figure}

\section{Empirical Results} \label{experiments}

In this section, we evaluate the performance of our method, henceforth called Wormhole Hamiltonian Monte Carlo (WHMC), using three examples. The first example involves
sampling from mixtures of Gaussian distributions with varying number of modes and dimensions. In this example, which is also discussed by \cite{ahn13}, the locations of modes are assumed to be known. The second
example, which was originally proposed by \cite{Ihler05}, involves inference regarding the locations of sensors in a network. For our third example, we also use mixtures of Gaussian distributions, but this time we assume that the
locations of modes are unknown.

We evaluate our method's performance by comparing it to Regeneration Darting Monte Carlo (RDMC) \citep{ahn13}, which is one of the most recent
algorithms designed for sampling from multimodal distributions based on the Darting Monte Carlo (DMC) \citep{sminchisescu11} approach. DMC defines
high density regions around the modes. When the sampler enters these regions, a jump between the regions will be attempted. RDMC enriches the DMC
method by using the regeneration approach \citep{mykland95, gilks98}.

We compare the two methods (i.e., WHMC and RDMC) in terms of Relative Error of Mean (REM) \citep{ahn13}, which summarizes the errors in approximating
the expectation of variables across all dimensions, and its value at time $t$ is $\mathrm{REM}(t) = \Vert \bar{\theta(t)}-\theta^{*}\Vert_{1}/ \Vert
\theta^{*}\Vert_{1}$. Here, $\bar{\theta(t)}$ is the mean of MCMC samples at time $t$ and $\theta^{*}$ is the true mean. Because RDMC uses standard
HMC algorithm with a flat metric, we also use the baseline metric ${\boldsymbol G}_{0} \equiv{\boldsymbol I}$ to make the two algorithms comparable. Our
approach, however, can be easily extended to other metrics such as Fisher information.

\begin{figure}[t]
\begin{center}
 \centerline{ 
 \includegraphics[height=2in, width=5.5in]{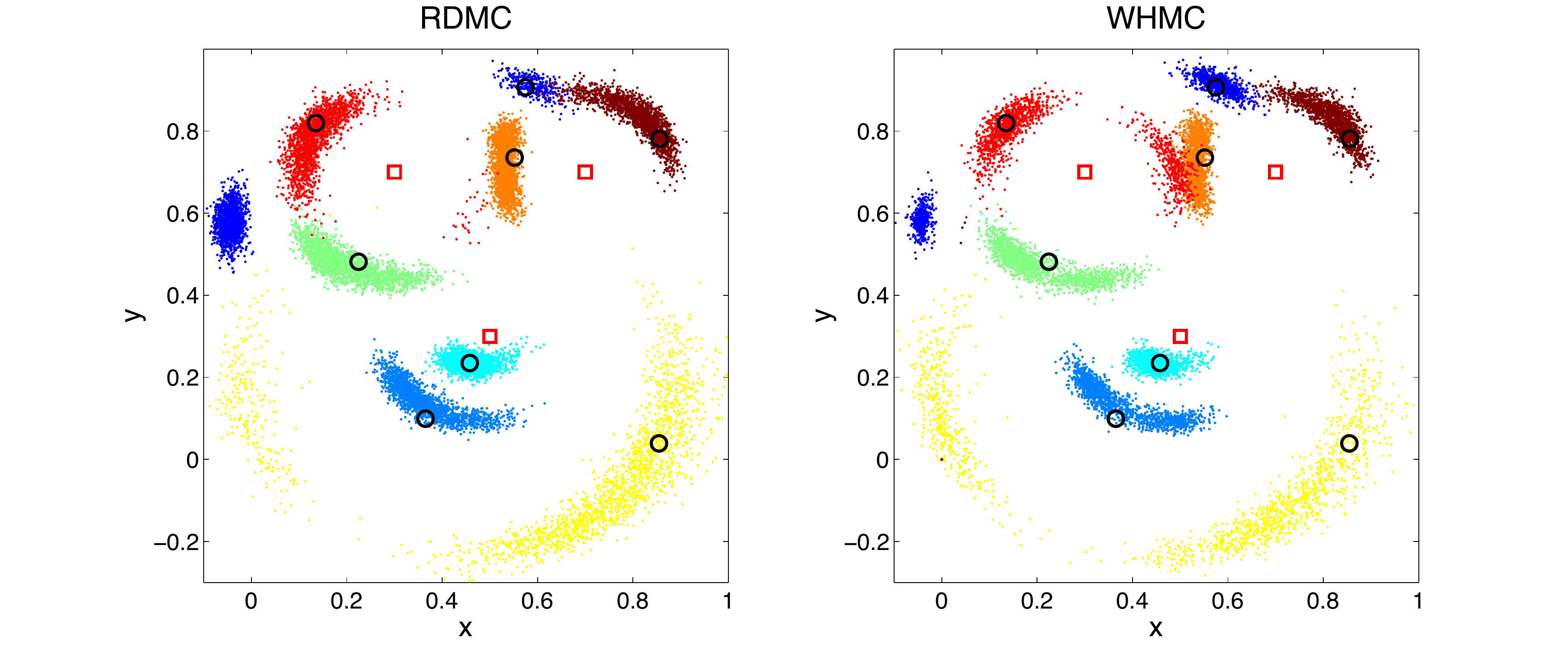}
}
\caption{Posterior samples for sensor locations using RDMC (left panel) and WHMC (right panel).}
\vspace{-18pt}
\label{timeplotwsn}
\end{center}
\end{figure}

\subsection{Mixture of Gaussians with Known Modes}
First, we evaluate the performance of our method based on sampling from $K$ mixtures of $D$-dimensional Gaussian distributions with \emph{known}
modes. (We relax this assumption later.) The means of these distributions are randomly generated from $D$-dimensional uniform
distributions such that the average pairwise distances remains around 20. The corresponding covariance matrices are constructed in a way that mixture
components have different density functions. Simulating samples from the resulting $D$ dimensional mixture of $K$ Gaussians is challenging because the
modes are far apart and the high density regions have different shapes.

The left panel of Figure \ref{KDplot} compares the two methods for varying number of mixture components, $K=5, 10, 15, 20$, with fixed dimension ($D=20$). The right panel
shows the results for varying number of dimensions, $D=10, 20, 40, 100$, with fixed number of mixture components ($K=10$). For both scenarios, we stop the two algorithms
after 500 seconds and compare their REM. As we can see, WHMC has substantially lower REM compared to RDMC, especially when the number of modes and
dimensions increase.


\subsection{Sensor Network Localization}
For our second example, we use a problem previously discussed by \cite{Ihler05} and \cite{ahn13}. We assume that $N$ sensors are scattered in a planar region with $2d$ locations
denoted as $\{x_{i}\}_{i=1}^{N}$. The distance $Y_{ij}$ between a pair of sensors $(x_{i}, x_{j})$ is observed with probability $\pi (x_{i},x_{j}) =
\exp(-\Vert x_{i}-x_{j}\Vert^{2}/(2R^{2}))$. If the distance is in fact observed ($Y_{ij}>0$), then $Y_{ij}$ follows a Gaussian distribution $\mathcal
N(\Vert x_{i}-x_{j}\Vert, \sigma^{2})$ with small $\sigma$; otherwise $Y_{ij}=0$. That is,
\begin{align*}
Z_{ij}=I(Y_{ij}>0)|x & \sim \mathrm{Binom}(1,\pi (x_{i},x_{j}))\\
Y_{ij}|Z_{ij}=1,x & \sim \mathcal N(\Vert x_{i}-x_{j}\Vert, \sigma^{2})
\end{align*}
where $Z_{ij}$ is a binary indicator set to 1 if the distance between $x_{i}$ and $x_{j}$ is observed. 

Given a set of observations $Y_{ij}$ and prior distribution of $x$, which is assumed to be uniform in this example, it is of interest to infer the
posterior distribution of all the sensor locations. Following \cite{ahn13}, we set $N=8, R=0.3, \sigma=0.02$, and add three additional base sensors
with known locations to avoid ambiguities of translation, rotation, and negation (mirror symmetry). The location of the 8 sensors form a multimodal
distribution with dimension $D=16$.

Figure \ref{timeplotwsn} shows the posterior samples based on the two methods. As we can see, RDMC very rarely visits one of the modes (shown in red
in the top middle part); whereas, WHMC generates enough samples from this mode to make it discernible. As a result, WHMC converges to a substantially lower REM ($0.02$)
compared to RDMC ($0.13$) after 500 seconds. 


\begin{figure}[t]
\begin{center}
 \centerline{
\includegraphics[height=2in, width=2.4in]{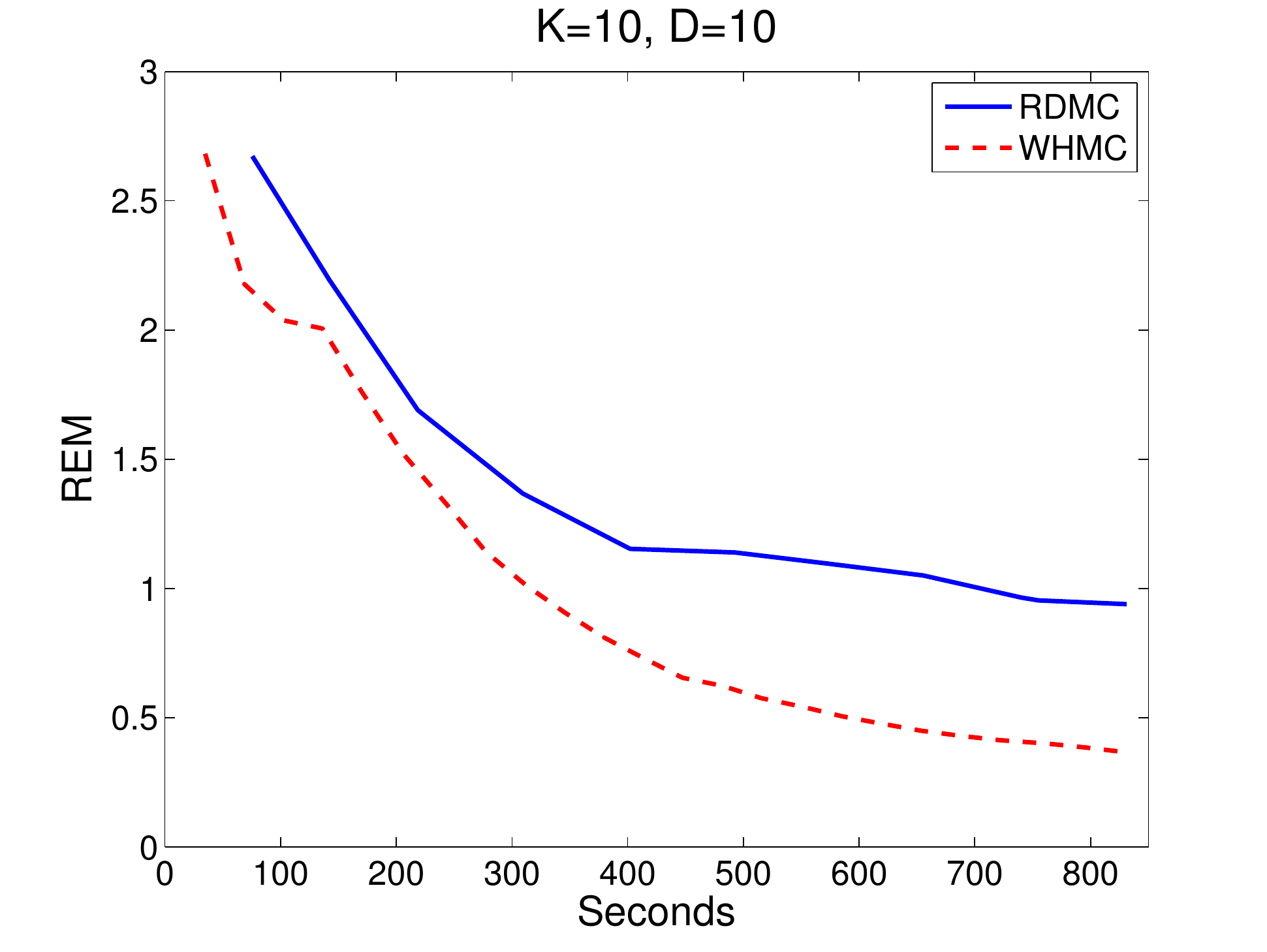}
\includegraphics[height=2in, width=2.4in]{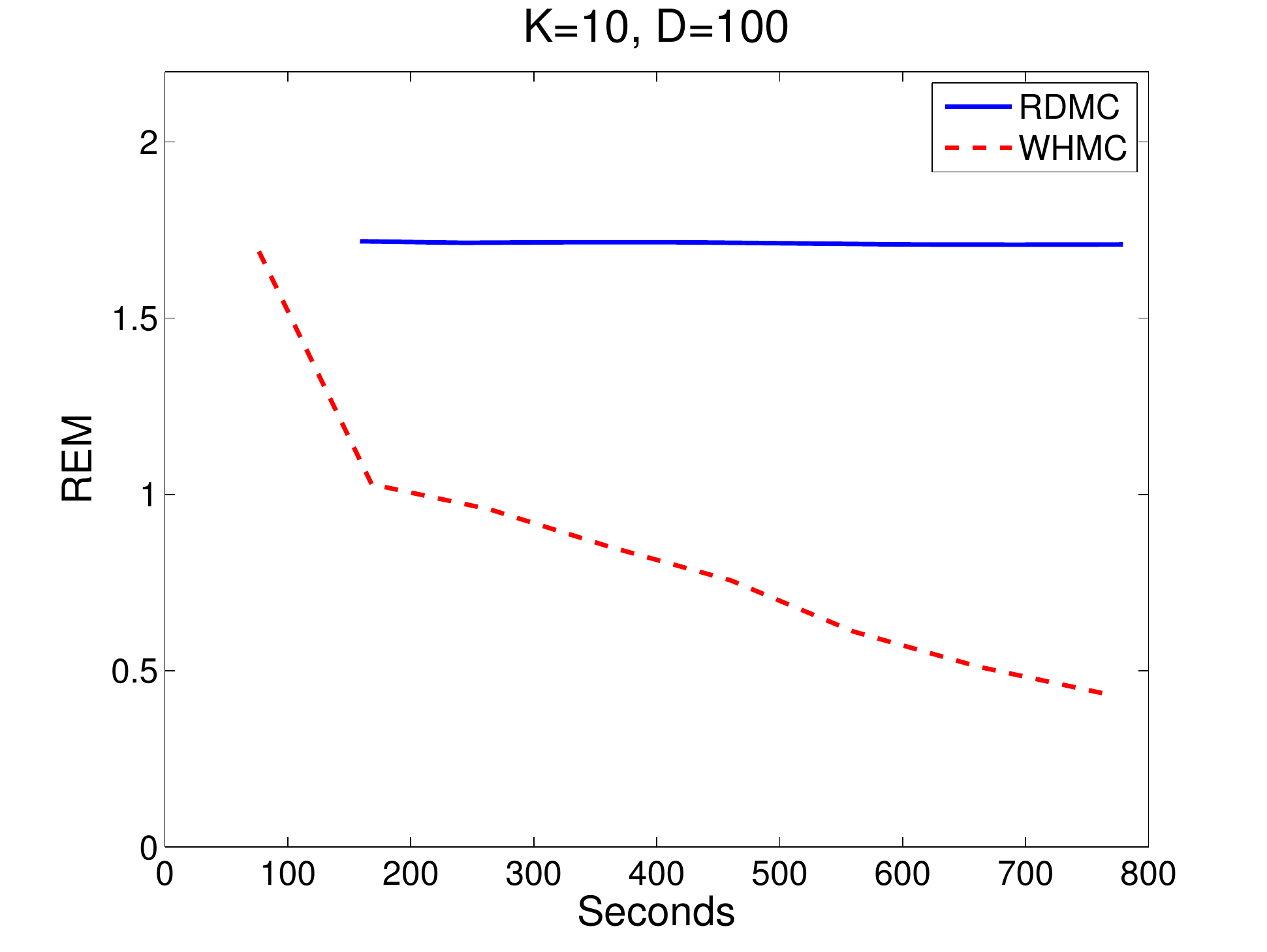}
}
\caption{Comparing WHMC to RDMC in terms of REM using $K=10$ mixtures of $D$-dimensional Gaussians with $D=20$ (left panel) and $D=100$ (right panel).}
\vspace{-18pt}
\label{unknownModes}
\end{center}
\end{figure}

\subsection{Mixture of Gaussians with Unknown Modes}

We now evaluate our method's performance in terms of searching for new modes and updating the network of wormholes.  For this example, we simulate a
mixture of 10 $D$-dimensional Gaussian distributions, with $D=10, 100$, and compare our method to RDMC. While RDMC runs four parallel HMC chains
initially to discover a subset of modes and to fit a truncated Gaussian distribution around each identified mode, we run four parallel optimizers
(different starting points) using the BFGS method. At regeneration times, each chain of RDMC uses the Dirichlet process mixture model to fit a new truncated
Gaussian around modes and possibly identify new modes. We on the other hand run the BGFS algorithm based on the residual energy function (with
$T=1.05$) to discover new modes for each chain. Figure \ref{unknownModes} shows WHMC reduces REM much faster than RDMC for both $D=10$ and $D=100$.
Here, the recorded time (horizontal axis) accounts for the computational overhead for adapting the transition kernels. For $D=10$, our method has a substantially lower REM compared to RDMC. For $D=100$, while our method identifies new modes over time and reduces REM substantially, RDMC fails to identify new modes so as a result its REM remains high over time. 

\section{Conclusions and Discussion}
We have proposed a new algorithm for sampling from multimodal distributions. Using empirical results, we have shown that our method performs well in high dimensions.

Our method involves several parameters that require tuning. However, these parameters can be adjusted at regeneration times without affecting the stationary distribution.

Although we used a flat base metric (i.e., $\boldsymbol I$) in the examples discussed in this paper, our method can be easily extended by specifying a more informative base metric (e.g., Fisher information) that adapts to local geometry.

\section*{Acknowledgements}
This material is based upon work supported by the National Science Foundation under Grant No. 1216045. 
We would like to thank Sungin Ahn for sharing his codes for the RDMC algorithm.

\newpage


\newpage
\appendix
\begin{center}
{\Huge Appendix}
\end{center}

In this supplementary document, we provide details of our proposed Wormhole Hamiltonian Monte Carlo (WHMC) algorithm and prove its convergence to the stationary distribution. For simplicity, we assume that ${\bf G}(\vect\theta)\equiv {\bf I}$. Our results can be extended to more general Riemannian metrics. 

In what follows, we first prove the convergence of our method when an external vector field is added to the dynamics. Next, we prove the convergence for our final algorithm, where besides an external vector field, we include an auxiliary dimension along which the network of wormholes are constructed. Finally, we provide our algorithm to identify regeneration times. 

\section{Adjustment of Metropolis acceptance probability in WHMC with vector field}

As mentioned in the paper, in high dimensional problems with isolated modes, the effect of wormhole metric could diminish fast as the sampler leaves one mode towards another mode. To avoid this issue, we have extended our method by including a vector field, ${\bf f}({\vect\theta}, {\bf v})$, which depends on a vicinity function illustrated in Figure \ref{picvic}. The resulting dynamics facilitates movements between modes,
\begin{equation}\label{rmldxf}
\begin{aligned}
&\dot{\vect\theta} && = && {\bf v} + {\bf f}({\vect\theta}, {\bf v})\\
&\dot{\bf v} && = && - \nabla_{\vect\theta} U({\vect\theta})
\end{aligned}
\end{equation}

We solve \eqref{rmldxf} using the generalized leapfrog integrator \citep{leimkuhler04,girolami11}:
\begin{alignat}{3}
&{\bf v}^{(\ell+1/2)} && = {\bf v}^{(\ell)} - \frac{\eps}{2} \nabla_{\vect\theta} U({\vect\theta}^{(\ell)})  \label{xflmc:1}\\
&{\vect\theta}^{(\ell+1)} && = {\vect\theta}^{(\ell)} + \epsilon [{\bf v}^{(\ell+1/2)} + ({\bf f}({\vect\theta}^{(\ell)}, {\bf v}^{(\ell+1/2)})+{\bf f}({\vect\theta}^{(\ell+1)}, {\bf v}^{(\ell+1/2)}))/2]
\label{xflmc:0}\\
&{\bf v}^{(\ell+1)} && = {\bf v}^{(\ell+1/2)} - \frac{\eps}{2} \nabla_{\vect\theta} U({\vect\theta}^{(\ell+1)}) \label{xflmc:2}
\end{alignat}
where $\ell$ is the index for leapfrog steps, and ${\bf v}^{(\ell+1/2)}$ denotes the current value of ${\bf v}$ after half
a step of leapfrog. The implicit equation \eqref{xflmc:0} can be solved by the fixed point iteration.

The integrator \eqref{xflmc:1}-\eqref{xflmc:2} is time reversible and numerically stable; however, it is not volume preserving. To fix this issue, we can adjust the Metropolis acceptance probability with the Jacobian determinant of the mapping $\hat T$ given by
\eqref{xflmc:1}-\eqref{xflmc:2} in order to satisfy the detailed balance condition. Denote ${\bf z}=({\vect\theta},{\bf v})$. Given the corresponding Hamiltonian function, $H({\bf z})$, we define $\mathbb P(d{\bf z})=\exp(-H({\bf z}))d{\bf z}$ and prove the following proposition \citep[See ][]{green95}.
\begin{prop}[Detailed Balance Condition with determinant adjustment]\label{prop:dbwda}
Let ${\bf z}'=\hat T_{L}({\bf z})$ be the proposal according to some time reversible integrator $\hat T_{L}$ for dynamics \eqref{rmldxf}. Then the detailed balance condition holds given the following adjusted acceptance probability: 
\begin{equation}\label{adjacpt}
\tilde\alpha({\bf z},{\bf z'}) =\min\left\{1,\frac{\exp(-H({\bf z'}))}{\exp(-H({\bf z}))}|\det d\hat T_{L}|\right\}
\end{equation}
\end{prop}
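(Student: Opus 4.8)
The plan is to follow the reversible-jump framework of \citet{green95}: treat the accept/reject step as a deterministic proposal composed with a Metropolis correction, and verify detailed balance with respect to $\mathbb P$ directly, as an equality of measures on the product space of pairs $({\bf z},{\bf z}')$. First I would record the structural facts about the integrator. Write $F({\vect\theta},{\bf v}) = ({\vect\theta},-{\bf v})$ for the momentum flip, note that $H\circ F = H$ since the kinetic energy is even in ${\bf v}$, and that $|\det dF| = 1$. Time-reversibility of $\hat T_L$ means $\hat T_L^{-1} = F\circ \hat T_L\circ F$; from this one checks that $T := F\circ \hat T_L$ is an involution, $T\circ T = \mathrm{id}$, with $|\det dT| = |\det d\hat T_L|$. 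The genuine HMC proposal is the map ${\bf z}\mapsto T{\bf z}$ (integrate, then flip), and because $H(T{\bf z}) = H(\hat T_L{\bf z})$ the ratio in \eqref{adjacpt} is unchanged whether it is written with $T{\bf z}$ or with $\hat T_L{\bf z}$.

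The transition kernel is $K({\bf z},d{\bf w}) = \tilde\alpha({\bf z},T{\bf z})\,\delta_{T{\bf z}}(d{\bf w}) + (1-\tilde\alpha({\bf z},T{\bf z}))\,\delta_{\bf z}(d{\bf w})$, and detailed balance asks that $\mathbb P(d{\bf z})\,K({\bf z},d{\bf w}) = \mathbb P(d{\bf w})\,K({\bf w},d{\bf z})$ as measures. The rejection (diagonal) term is trivially symmetric, so everything reduces to the accepted part. Pairing the accepted part of each side against a test function $\phi({\bf z},{\bf w})$ and performing the change of variables ${\bf w} = T{\bf z}$ (legitimate because $\hat T_L$ is a diffeomorphism, so $|\det d\hat T_L|\neq 0$), then using $T\circ T=\mathrm{id}$ to simplify $T{\bf w}={\bf z}$, I would reduce the required statement to the pointwise identity
\[
 e^{-H({\bf z})}\,\tilde\alpha({\bf z},T{\bf z}) \;=\; e^{-H(T{\bf z})}\,\tilde\alpha(T{\bf z},{\bf z})\,|\det dT({\bf z})|.
\]

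Finally I would verify this identity algebraically. Differentiating $T\circ T=\mathrm{id}$ gives $dT(T{\bf z}) = (dT({\bf z}))^{-1}$, so the two Jacobian factors that appear are reciprocal: $|\det dT(T{\bf z})| = 1/|\det dT({\bf z})|$. Setting $a = e^{-H({\bf z})}$, $b = e^{-H(T{\bf z})}$, and $J = |\det dT({\bf z})|$, the left-hand side becomes $a\,\min\{1,(b/a)J\} = \min\{a,bJ\}$, while the right-hand side becomes $bJ\,\min\{1,(a/b)(1/J)\} = \min\{bJ,a\}$; the two agree, which is exactly the standard Metropolis--Hastings cancellation $a\min\{1,c/a\} = c\min\{1,a/c\}$ with $c = bJ$. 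Specializing to a volume-preserving integrator ($J\equiv 1$) recovers the usual HMC acceptance rule, confirming the correction term does precisely what is required.

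I expect the main obstacle to be bookkeeping around the time-reversal structure rather than any deep analytic difficulty: one must ensure that the reverse proposal out of ${\bf z}'$ really returns to ${\bf z}$, which is exactly where the involution property $T\circ T=\mathrm{id}$ (equivalently $\hat T_L^{-1} = F\hat T_L F$) and the energy-invariance $H\circ F = H$ do the work, and one must place the Jacobian determinant on the correct side of the change of variables. A secondary point worth checking is that the map $\hat T_L$ defined by \eqref{xflmc:1}--\eqref{xflmc:2} is indeed a diffeomorphism on the relevant domain, so that the change of variables and the reciprocal-Jacobian relation are valid.
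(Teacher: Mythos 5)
Your proposal is correct and follows essentially the same route as the paper's proof: both reduce detailed balance to the Metropolis--Hastings cancellation $a\min\{1,c/a\}=c\min\{1,a/c\}$ with $c=e^{-H({\bf z}')}\lvert\det d\hat T_L\rvert$, via a change of variables that picks up the reciprocal Jacobian. The only difference is that you make explicit the momentum-flip involution $T=F\circ\hat T_L$ and the identities $H\circ F=H$, $\lvert\det dT(T{\bf z})\rvert=1/\lvert\det dT({\bf z})\rvert$, which the paper's terser argument leaves implicit when it writes ${\bf z}=\hat T_L^{-1}({\bf z}')$ and treats $\tilde\alpha({\bf z}',{\bf z})$ as the reverse acceptance probability.
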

\begin{proof}
To prove detailed balance, we need to show
\begin{equation}\label{dbvc}
\tilde\alpha({\bf z},{\bf z'}) \mathbb P(d{\bf z}) = \tilde\alpha({\bf z'},{\bf z}) \mathbb P(d{\bf z'})
\end{equation}
The following steps show that this condition holds:
\begin{equation*}
\begin{split}
\tilde\alpha({\bf z},{\bf z'}) \mathbb P(d{\bf z}) & =\min\left\{1,\frac{\exp(-H({\bf z'}))}{\exp(-H({\bf z}))}\left|\frac{d{\bf z'}}{d{\bf z}}\right|\right\} \exp(-H({\bf z}))d{\bf z}\\
& \overset{{\bf z}= \hat T_{L}^{-1}({\bf z'})}= \min\left\{\exp(-H({\bf z})),\exp(-H({\bf z'}))\left|\frac{d{\bf z'}}{d{\bf z}}\right|\right\} \left|\frac{d{\bf z}}{d{\bf z'}}\right|d{\bf z'}\\
& = \min\left\{1,\frac{\exp(-H({\bf z}))}{\exp(-H({\bf z'}))}\left|\frac{d{\bf z}}{d{\bf z'}}\right|\right\} \exp(-H({\bf z'})) d{\bf z'} = \tilde\alpha({\bf z'},{\bf z}) \mathbb P(d{\bf z'})
\end{split}
\end{equation*}
\end{proof}

\begin{figure}[t]
  \begin{center}
    \includegraphics[scale=0.5]{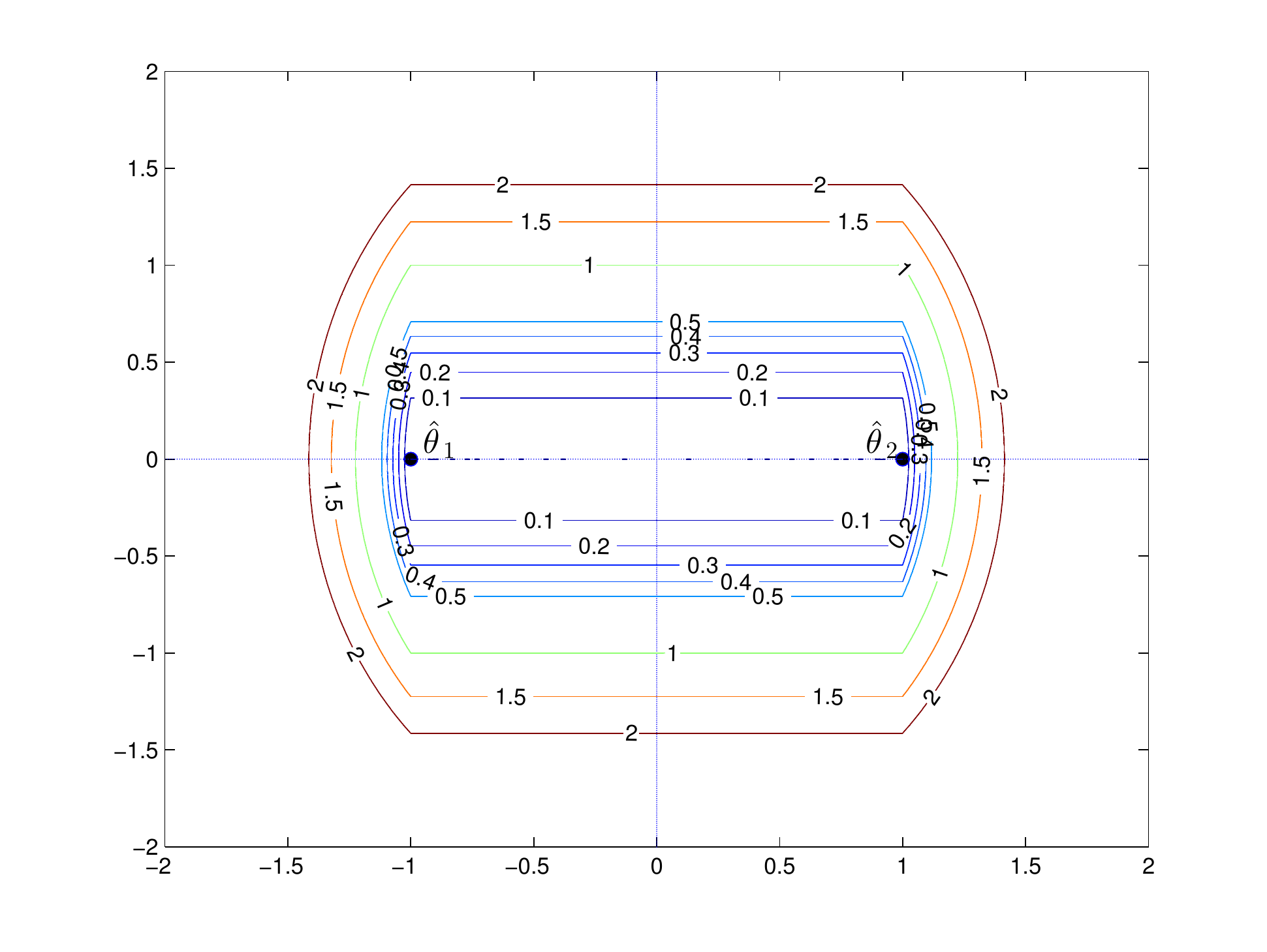}
  \end{center}
  \vspace{-10pt}
  \caption{Contour plot of the vicinity function $V(\vect\theta)$ in Equation (6) of our paper.}
  \label{picvic}
\end{figure}

We implement \eqref{xflmc:1}-\eqref{xflmc:2} for $L$ steps to generate a
proposal ${\bf z}^{(L+1)}$ and accept it with the following adjusted
probability:
\begin{equation*}
\alpha_{VF}({\bf z}^{(1)},{\bf z}^{(L+1)}) = \min\{1,\exp(-H({\bf z}^{(L+1)})+H({\bf z}^{(1)}))|\det {\bf J}_{VF}|\}
\end{equation*}
where the Jacobian determinant, $\det {\bf J}_{VF} = \prod_{n=1}^L \left|\dfrac{d{\bf z}^{(\ell+1)}}{d{\bf z}^{(\ell)}}\right|
=\prod_{n=1}^L \left|\dfrac{\pa({\vect\theta}^{(\ell+1)},{\bf v}^{(\ell+1)})}{\pa({\vect\theta}^{(\ell)},{\bf v}^{(\ell)})}\right|$, 
can be calculated through the following wedge product:
\begin{equation*}
d{\vect\theta}^{(\ell+1)} \wedge d{\bf v}^{(\ell+1)} = 
[{\bf I}-\frac{\eps}{2} \nabla_{{\vect\theta}^{\mathsf T}} {\bf f}({\vect\theta}^{(\ell+1)},{\bf v}^{(\ell+1/2)})]^{-1} [{\bf I}+\frac{\eps}{2} \nabla_{{\vect\theta}^{\mathsf T}} {\bf f}({\vect\theta}^{(\ell)},{\bf v}^{(\ell+1/2)})]\; d{\vect\theta}^{(\ell)} \wedge d{\bf v}^{(\ell)}
\end{equation*}
with $\nabla_{{\vect\theta}^{\mathsf T}} {\bf f}({\vect\theta},{\bf v}) = {\bf
v}^*_W({\bf v}^*_W)^{\mathsf T}{\bf v} \nabla\mathfrak m({\vect\theta})^{\mathsf T}$ \citep[See][for more details.]{lan12}

\section{WHMC in the augmented $D+1$ dimensional space}
Suppose that the current position, $\tilde{\vect\theta}$, of the sampler is near a mode denoted as $\tilde{\vect\theta}^{*}_{0}$. A
network of wormholes connects this mode to all the modes in the opposite world $\tilde{\vect\theta}^{*}_{k}, k=1,\cdots K$. Wormholes in the augmented space starting from this mode may still interfere each other since they intersect. To resolve this issue, instead of
deterministically weighing wormholes by the vicinity function (6), we use the following random
vector field $\tilde{\bf f}(\tilde{\vect\theta}, \tilde{\bf v})$:
\begin{equation*}
\tilde{\bf f}(\tilde{\vect\theta}, \tilde{\bf v})\sim
\begin{dcases}
(1-\sum_k \mathfrak m_k(\tilde{\vect\theta})) \delta_{\tilde{\bf v}}(\cdot) +
\sum_k \mathfrak m_k(\tilde{\vect\theta}) \delta_{2(\tilde{\vect\theta}^{*}_{k}-\tilde{\vect\theta})/e}(\cdot), & \;
\textrm{if}\, \sum_k \mathfrak m_k(\tilde{\vect\theta})<1 \\
\frac{\sum_k \mathfrak m_k(\tilde{\vect\theta}) \delta_{2(\tilde{\vect\theta}^{*}_{k}-\tilde{\vect\theta})/e}(\cdot)}{\sum_k
\mathfrak m_k(\tilde{\vect\theta})}, & \; \textrm{if}\, \sum_k \mathfrak m_k(\tilde{\vect\theta})\geq 1
\end{dcases}
\end{equation*}
where $e$ is the stepsize, $\delta$ is the Kronecker delta function, and $\mathfrak m_k(\tilde{\vect\theta}) =
\exp\{-V_k(\tilde{\vect\theta})/(DF)\}$. Here, the vicinity function $V_k(\tilde{\vect\theta})$ along the $k$-th wormhole is defined similarly to Equation (6),
\begin{equation*}
V_k(\tilde{\vect\theta})= \langle \tilde{\vect\theta}-\tilde{\vect\theta}^{*}_{0}, \tilde{\vect\theta}-\tilde{\vect\theta}^{*}_{k}\rangle + |\langle \tilde{\vect\theta}-\tilde{\vect\theta}^{*}_{0},\tilde{\bf v}^*_{W_{k}}\rangle||\langle \tilde{\vect\theta}-\tilde{\vect\theta}^{*}_{k},\tilde{\bf v}^*_{W_{k}}\rangle|
\end{equation*}
where $\tilde{\bf v}^*_{W_{k}} = (\tilde{\vect\theta}^{*}_{k}-\tilde{\vect\theta}^{*}_{0})/\Vert \tilde{\vect\theta}^{*}_{k}-\tilde{\vect\theta}^{*}_{0} \Vert$.


For each update, $\tilde{\bf f}(\tilde{\vect\theta}, \tilde{\bf v})$ is either set to $\tilde{\bf v}$ or
$2(\tilde{\vect\theta}^{*}_{k}-\tilde{\vect\theta})/e$ according to the position dependent probabilities defined in terms of $\mathfrak m_k(\tilde{\vect\theta})$. 
Therefore, we write the Hamiltonian dynamics in the extended space as follows:
\begin{equation}\label{rmldrandxf}
\begin{aligned}
&\dot{\tilde{\vect\theta}} && = && \tilde{\bf f}(\tilde{\vect\theta}, \tilde{\bf v})\\
&\dot{\tilde{\bf v}} && = && - \nabla_{\tilde{\vect\theta}} U(\tilde{\vect\theta})\end{aligned}
\end{equation}
We use \eqref{xflmc:1}-\eqref{xflmc:2} to numerically solve \eqref{rmldrandxf},
but replace \eqref{xflmc:0} with the following equation: 
\begin{equation}\label{stochevol}
\tilde{\vect\theta}^{(\ell+1)} = \tilde{\vect\theta}^{(\ell)} + e/2 [ \tilde{\bf f}(\tilde{\vect\theta}^{(\ell+1)}, \tilde{\bf v}^{(\ell+1/2)})+\tilde{\bf f}(\tilde{\vect\theta}^{(\ell)}, \tilde{\bf v}^{(\ell+1/2)})]
\end{equation}
Note that this is an implicit equation, which can be solved using the fixed point iteration approach \citep{leimkuhler04, girolami11}.

According to the above dynamic, at each leapfrog step, $\ell$, the sampler either stays at the vicinity of $\tilde{\vect\theta}^{*}_{0}$ or
proposes a move towards a mode $\tilde{\vect\theta}^{*}_{k}$ in the opposite world depending on the values of $\vect{\tilde
f}(\tilde{\vect\theta}^{(\ell)} \tilde{\bf v}^{(\ell+1/2)})$ and $\tilde{\bf f}(\tilde{\vect\theta}^{(\ell+1)},
\tilde{\bf v}^{(\ell+1/2)})$. For example, if $\tilde{\bf f}(\tilde{\vect\theta}^{(\ell)}, \tilde{\bf v}^{(\ell+1/2)}) =
2(\tilde{\vect\theta}^{*}_{k}-\tilde{\vect\theta}^{(\ell)})/e$, and $\tilde{\bf f}(\tilde{\vect\theta}^{(\ell+1)},
\tilde{\bf v}^{(\ell+1/2)})=\tilde{\bf v}^{(\ell+1/2)}$, then equation \eqref{stochevol} becomes
\begin{equation*}
\tilde{\vect\theta}^{(\ell+1)} = \tilde{\vect\theta}^{*}_{k} + \frac{e}{2} \tilde{\bf v}^{(\ell+1/2)}
\end{equation*}
which indicates that a move to the $k$-th mode in the opposite wold has in fact occurred. Note that the movement
$\tilde{\vect\theta}^{(\ell)}\to \tilde{\vect\theta}^{(\ell+1)}$ in this case is discontinuous since $$\lim_{e\to 0}\Vert
\tilde{\vect\theta}^{(\ell+1)} - \tilde{\vect\theta}^{(\ell)}\Vert \geq 2h >0$$
Therefore, in such cases, there will be an energy gap, $\Delta E := H(\tilde{\vect\theta}^{(\ell+1)}, \tilde{\bf v}^{(\ell+1)}) -
H(\tilde{\vect\theta}^{(\ell)}, \tilde{\bf v}^{(\ell)})$, between the two states. We need to adjust the Metropolis acceptance probability
to account for the resulting energy gap.\footnote{Proposition \ref{prop:dbwda} does not apply because Jacobian determinant is not well defined for discontinuous movement.} Further, we limit the maximum number of jumps within each iteration of MCMC (i.e., over $L$ leapfrog steps) to 1 so the sampler can explore the vicinity of the new mode before making another jump. Algorithm \ref{Alg:WHMC} provides the details of this approach.

We note that according to the definition of $\tilde{\bf f}(\tilde{\vect\theta}, \tilde{\bf v})$ and equation \eqref{stochevol}, the
jump occurs randomly. We use $\ell'$ to denote the step at which the sampler jumps. That is, $\ell'$ randomly takes a value in $\{0,1,\cdots,L\}$ depending on $\tilde{\bf f}$.
When there is no jump along the trajectory, we set $\Delta E=0$, $\ell'=0$, and the algorithm reduces to standard HMC.
In the following, we first prove the detailed balance condition when a jump happens, and then use it to prove the convergence of the algorithm to the stationary distribution.

When a mode jumping occurs at
some fixed step $\ell'$, we can divide the $L$ leapfrog steps into three parts: $\ell'-1$
steps continuous movement according to standard HMC, 1 step discontinuous jump, 
and $L-\ell'$ steps according to standard HMC in the opposite world. Note that the Metropolis acceptance probability in standard HMC can be written as
\begin{equation*}
\alpha(\tilde{\bf z}^{(L+1)},\tilde{\bf z}^{(1)}) = \min\{1,\exp(-H(\tilde{\bf
z}^{(L+1)})+H(\tilde{\bf z}^{(1)}))\} =
\min\left\{1,\exp\left(-\sum_{\ell=1}^{L} (H(\tilde{\bf
z}^{(\ell+1)})-H(\tilde{\bf z}^{(\ell)}))\right)\right\}
\end{equation*}
Each summand $H(\tilde{\bf z}^{(\ell+1)})-H(\tilde{\bf z}^{(\ell)})$ is small ($\mathcal
O(\eps^3)$) except for the $\ell'$-th one, where there is an energy gap, $\Delta E$.
Given that the jump happens at the $\ell'$-th step, we should remove
the $\ell'$-th summand from the acceptance probability:
\begin{align}
\alpha_{RVF}(\tilde{\bf z}^{(1)},\tilde{\bf z}^{(L+1)}) &=
\min\left\{1,\exp\left(-\sum_{\ell\neq \ell'} (H(\tilde{\bf
z}^{(\ell+1)})-H(\tilde{\bf z}^{(\ell)}))\right)\right\} \nonumber\\
&= \min\left\{1,\frac{\exp(-H(\tilde{\bf z}^{(L+1)}))}{\exp(-H(\tilde{\bf
z}^{(\ell'+1)}))}\frac{\exp(-H(\tilde{\bf z}^{(\ell')}))}{\exp(-H(\tilde{\bf
z}^{(1)}))}\right\} \label{apHMCcont}\\
&=\min\{1,\exp(-H(\tilde{\bf z}^{(L+1)})+H(\tilde{\bf z}^{(1)})+\Delta E)\} \nonumber
\end{align}
That is, we only count the acceptance probability for the
first $\ell'-1$ and the last $L-\ell'$ steps of continuous movement in standard
HMC; at $\ell'+1$ step, we ``reset" the energy level by accounting for the energy
gap $\Delta E$. Therefore, the following proposition is true.
\begin{prop}[Detailed Balance Condition with energy adjustment]\label{prop:dbwea}
When a discontinuous jump happens in WHMC, we have the following detailed balance condition with the adjusted acceptance probability \eqref{apHMCcont}.
\begin{equation}\label{jtDB}
\alpha_{RVF}(\tilde{\bf z}^{(1)},\tilde{\bf z}^{(L+1)}) \mathbb P(d\tilde{\bf
z}^{(1)}) \mathbb P(d\tilde{\bf z}^{(\ell'+1)}) = \alpha_{RVF}(\tilde{\bf
z}^{(L+1)},\tilde{\bf z}^{(1)}) \mathbb P(d\tilde{\bf z}^{(L+1)}) \mathbb
P(d\tilde{\bf z}^{(\ell')})
\end{equation}
\end{prop}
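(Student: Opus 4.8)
The plan is to follow the same bookkeeping as in the proof of Proposition~\ref{prop:dbwda}, but now tracking a product measure over the pair consisting of the initial state and the post-jump state, and exploiting the fact that the two continuous leapfrog segments bracketing the jump are each volume preserving and time reversible. First I would expand the left-hand side of \eqref{jtDB} using $\mathbb P(d\tilde{\bf z})=\exp(-H(\tilde{\bf z}))\,d\tilde{\bf z}$ together with the factored form of $\alpha_{RVF}$ in \eqref{apHMCcont}. Absorbing the two density prefactors $\exp(-H(\tilde{\bf z}^{(1)}))$ and $\exp(-H(\tilde{\bf z}^{(\ell'+1)}))$ into the $\min$ yields
\begin{equation*}
\min\left\{\exp(-H(\tilde{\bf z}^{(1)}))\exp(-H(\tilde{\bf z}^{(\ell'+1)})),\; \exp(-H(\tilde{\bf z}^{(\ell')}))\exp(-H(\tilde{\bf z}^{(L+1)}))\right\}\, d\tilde{\bf z}^{(1)}\, d\tilde{\bf z}^{(\ell'+1)}.
\end{equation*}
The argument of this $\min$ is manifestly invariant under the exchange $(\tilde{\bf z}^{(1)},\tilde{\bf z}^{(\ell'+1)})\leftrightarrow(\tilde{\bf z}^{(L+1)},\tilde{\bf z}^{(\ell')})$, which is precisely the exchange relating the forward acceptance ratio in \eqref{apHMCcont} to its reverse. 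Hence the scalar parts of both sides of \eqref{jtDB} already agree, and all that remains is to match the Lebesgue differentials.

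Next I would decompose the $L$ leapfrog steps into the first $\ell'-1$ continuous standard-HMC steps carrying $\tilde{\bf z}^{(1)}\mapsto\tilde{\bf z}^{(\ell')}$, the single discontinuous jump $\tilde{\bf z}^{(\ell')}\mapsto\tilde{\bf z}^{(\ell'+1)}$, and the final $L-\ell'$ continuous steps carrying $\tilde{\bf z}^{(\ell'+1)}\mapsto\tilde{\bf z}^{(L+1)}$. Because each continuous leapfrog map is symplectic and therefore volume preserving, one has $d\tilde{\bf z}^{(1)}=d\tilde{\bf z}^{(\ell')}$ and $d\tilde{\bf z}^{(\ell'+1)}=d\tilde{\bf z}^{(L+1)}$, so that $d\tilde{\bf z}^{(1)}\,d\tilde{\bf z}^{(\ell'+1)}=d\tilde{\bf z}^{(L+1)}\,d\tilde{\bf z}^{(\ell')}$. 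Substituting this identity into the display above converts the left-hand side of \eqref{jtDB} into exactly the right-hand side, since the symmetric $\min$ is unchanged, completing the argument.

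The main obstacle is conceptual rather than computational: one must see why reversibility has to be formulated as a joint balance over the two states bracketing the jump instead of as a single transition. Since the jump $\tilde{\bf z}^{(\ell')}\mapsto\tilde{\bf z}^{(\ell'+1)}$ is discontinuous, Proposition~\ref{prop:dbwda} and its Jacobian adjustment do not apply, and the energy gap $\Delta E$ cannot be absorbed into a determinant. The resolution is that $\alpha_{RVF}$ in \eqref{apHMCcont} deliberately omits the $\ell'$-th energy increment, so $\Delta E$ enters symmetrically and drops out of the balance. The point that genuinely needs checking is the factored form of the reverse acceptance probability $\alpha_{RVF}(\tilde{\bf z}^{(L+1)},\tilde{\bf z}^{(1)})$: one must verify, using the time reversibility of the continuous segments, that the reversed trajectory indeed executes its jump at step $L-\ell'+1$ with the momenta negated, so that its start, before-jump, after-jump, and end states are $\tilde{\bf z}^{(L+1)},\tilde{\bf z}^{(\ell'+1)},\tilde{\bf z}^{(\ell')},\tilde{\bf z}^{(1)}$ respectively, which is what produces the transposed ratio needed for the symmetry above.
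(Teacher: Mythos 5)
Your proof is correct and follows essentially the route the paper intends: the paper states the proposition as an immediate consequence of the factored form \eqref{apHMCcont} without writing out an explicit proof, and your argument supplies exactly the missing steps --- absorbing the stationary densities into the $\min$ to expose its symmetry under the exchange $(\tilde{\bf z}^{(1)},\tilde{\bf z}^{(\ell'+1)})\leftrightarrow(\tilde{\bf z}^{(L+1)},\tilde{\bf z}^{(\ell')})$, and matching the product of differentials via volume preservation of the two continuous leapfrog segments. Your closing remark, that one must check the reversed trajectory jumps between the momentum-flipped images of $\tilde{\bf z}^{(\ell'+1)}$ and $\tilde{\bf z}^{(\ell')}$ so that invariance of $H$ under $\tilde{\bf v}\mapsto-\tilde{\bf v}$ yields the transposed ratio, is precisely the one detail the paper glosses over.
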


Next, we use $\hat T_{\ell}: \tilde{\bf z}^{(\ell)}\mapsto \tilde{\bf z}^{(\ell+1)}$
to denote an integrator consisting of \eqref{xflmc:1}\eqref{stochevol}\eqref{xflmc:2}. Note that
$\hat T_{\ell}$ depends on $\tilde{\bf f}$.
Denote $\hat T_{1:\ell}:=\hat T_{\ell}\circ\cdots\circ\hat T_{1}$.
Following \cite{liu01}, we can prove the following stationarity theorem.
\begin{thm}[Stationarity of WHMC]
The samples given by WHMC (algorithm \ref{Alg:WHMC}) have the target distribution $\pi(\cdot)$ as its stationarity distribution.
\end{thm}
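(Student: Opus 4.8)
The plan is to show that one iteration of the WHMC kernel leaves the augmented target $\mathbb P(d\tilde{\bf z}) = \exp(-H(\tilde{\bf z}))\,d\tilde{\bf z}$ invariant by verifying a (generalized) detailed balance condition, and then to invoke the standard implication that $\pi$-reversibility yields $\pi$-stationarity, following \cite{liu01}. The momentum resampling $\tilde{\bf v}\sim\mathcal N(\mathbf 0,\mathbf I)$ that opens each iteration contributes independently to $\exp(-H)$ and hence preserves $\mathbb P$ automatically, so the entire content of the theorem lies in the proposal generated by the integrator $\hat T_{1:L}$ together with the Metropolis accept/reject step. Marginalizing out the momentum and the auxiliary coordinate $\theta_{D+1}$ at the end then transfers invariance on the $(D+1)$-dimensional augmented space to stationarity of $\pi(\cdot)$ on the original parameter space.

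First I would decompose the one-iteration kernel by conditioning on the realization of the random vector field $\tilde{\bf f}$ along the trajectory, i.e., on whether a jump occurs and, if so, at which leapfrog step $\ell'$ and to which opposite-world mode $k$. Because the algorithm caps the number of jumps per iteration at one, there are exactly two regimes: (i) the \emph{no-jump} regime, in which $\hat T_{1:L}$ coincides with the standard leapfrog integrator (here $\Delta E=0$, $\ell'=0$), and (ii) the \emph{single-jump} regime, in which the trajectory splits into $\ell'-1$ continuous HMC steps, one discontinuous wormhole jump, and $L-\ell'$ continuous HMC steps in the opposite world, exactly as analyzed just before Proposition \ref{prop:dbwea}.

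For regime (i) I would invoke the classical HMC argument: with $\mathbf G\equiv\mathbf I$ the leapfrog map is time-reversible and volume-preserving, so the ordinary Metropolis probability $\min\{1,\exp(-H(\tilde{\bf z}^{(L+1)})+H(\tilde{\bf z}^{(1)}))\}$ satisfies detailed balance with respect to $\mathbb P$. For regime (ii) I would invoke Proposition \ref{prop:dbwea} directly: the energy-adjusted acceptance probability $\alpha_{RVF}$ of \eqref{apHMCcont} satisfies the generalized balance relation \eqref{jtDB}, which pairs the forward path against its time-reversal through the jump by resetting the energy level by $\Delta E$; composed with the measure-preserving continuous HMC portions on either side, this makes the whole single-jump trajectory $\mathbb P$-reversible. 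The essential extra ingredient is that the stochastic \emph{selection} of $\tilde{\bf f}$ be itself reversible: the probability of proposing the jump from the neighborhood of $\tilde{\vect\theta}^*_0$ to $\tilde{\vect\theta}^*_k$, governed by the mollifier $\mathfrak m_k$, must match the probability the chain would assign to the reverse jump launched from the landing state. This is where the symmetric construction of the wormhole network—every real-world mode connected to every mirror-world mode and conversely—together with the symmetry of the vicinity function $V_k$ in its two endpoints $\tilde{\vect\theta}^*_0,\tilde{\vect\theta}^*_k$, is exactly what is needed.

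The main obstacle, and the step I would handle most carefully, is this reconciliation of forward and reverse jump proposal probabilities. Since $\tilde{\bf f}$ is a genuinely stochastic (Kronecker-delta mixture) vector field rather than a deterministic, differentiable map, one cannot simply cite a Jacobian as in Proposition \ref{prop:dbwda}; instead one must show that the product of the jump-selection probability with the densities of the adjoining continuous steps factorizes so that \eqref{jtDB} upgrades from an identity between point masses to a statement about the full conditional transition kernel. Once each conditional regime is shown to be $\mathbb P$-reversible (ordinary detailed balance in regime (i), its energy-adjusted analogue in regime (ii)), the conclusion is routine: the full WHMC kernel is a mixture over the jump scenarios with the correct position-dependent weights, a mixture of $\mathbb P$-reversible kernels is again $\mathbb P$-reversible and hence $\mathbb P$-stationary, and marginalization over the auxiliary variables delivers $\pi(\cdot)$ as the stationary distribution of algorithm \ref{Alg:WHMC}.
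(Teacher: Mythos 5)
Your proposal assembles the same ingredients as the paper's proof --- the no-jump case reduces to standard HMC, the single-jump case is handled by the energy-adjusted detailed balance of Proposition \ref{prop:dbwea}, and time-reversibility of the integrator $\hat T_{\ell}$ together with the momentum-flip involution $\nu:(\tilde{\vect\theta},\tilde{\bf v})\mapsto(\tilde{\vect\theta},-\tilde{\bf v})$ does the pairing of forward and backward trajectories --- but the top-level logic differs. The paper does not claim reversibility of the composite kernel; following \cite{liu01} it proves stationarity directly, writing $\E_f[h(\tilde{\vect\theta}^*)]$ as an integral over trajectories with the jump time $\ell'$ integrated out, reducing the claim to the identity \eqref{boilsdown}, and verifying that identity by a change of variables using $\hat T_{\ell}^{-1}=\nu\hat T_{\ell}\nu$ and the balance relation \eqref{jtDB}. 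You instead condition on the jump scenario, argue each conditional kernel is $\mathbb P$-reversible, and close with ``a mixture of reversible kernels is reversible.'' That last step is where your route is weaker than the paper's: the mixture weights here are position-dependent (through $\mathfrak m_k$ and the random field $\tilde{\bf f}$), and a state-dependent mixture of reversible kernels is not automatically reversible --- this is exactly the forward/reverse jump-selection symmetry you correctly flag as ``the main obstacle'' but do not actually discharge. The paper's direct-stationarity computation avoids having to assert reversibility of the full kernel, although it too treats the selection probabilities of $\tilde{\bf f}$ somewhat informally, via the extra factors $\mathbb P(d\hat T_{1:\ell'}(\tilde{\bf z}^*))$ and the observation that the asymmetric split of the trajectory into $\ell'-1$ versus $L-\ell'$ leapfrog steps is immaterial because the number of steps may be randomized in HMC. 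In short: same lemmas, different skeleton; yours needs the jump-selection symmetry made precise before the conclusion becomes routine, whereas the paper's sidesteps that by computing stationarity directly rather than via reversibility of the mixture.
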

\begin{proof}
Let $\tilde{\bf z}^*=\hat T_{1:L}(\tilde{\bf z})$.
Suppose $\tilde{\vect\theta}^*\sim f(\tilde{\vect\theta}^*)$.
We want to prove that $f(\cdot) = \pi(\cdot)$ through
$\E_f[h(\tilde{\vect\theta}^*)]=\E_{\pi}[h(\tilde{\vect\theta}^*)]$ for any square
integrable function $h(\tilde{\vect\theta}^*)$.

Note that $\tilde{\vect\theta}^*$ is either an
accepted proposal or the current state after a rejection. Therefore,
\begin{equation*}
\begin{split}
&\E_f[h(\tilde{\vect\theta}^*)] \\
= & \iiint h(\tilde{\vect\theta}^*) [\alpha(\hat T_{1:L}^{-1}(\tilde{\bf
z}^*), \tilde{\bf z}^*) \mathbb P(d\hat T_{1:L}^{-1}(\tilde{\bf z}^*)) \mathbb P(d\hat
T_{1+\ell':L}^{-1}(\tilde{\bf z}^*)) + (1-\alpha(\tilde{\bf z}^*,\hat
T_{1:L}(\tilde{\bf z}^*))) \mathbb P(d\tilde{\bf z}^*) \mathbb P(d\hat
T_{1:\ell'}(\tilde{\bf z}^*))] d\ell'\\
= & \int h(\tilde{\vect\theta}^*) \mathbb P(d\tilde{\bf z}^*) +\\
& \iiint h(\tilde{\vect\theta}^*)[\alpha(\hat T_{1:L}^{-1}(\tilde{\bf z}^*),
\tilde{\bf z}^*)\mathbb P(d \hat T_{1:L}^{-1}(\tilde{\bf z}^*)) \mathbb P(d\hat
T_{1+\ell':L}^{-1}(\tilde{\bf z}^*)) -
\alpha(\tilde{\bf z}^*,\hat T_{1:L}(\tilde{\bf z}^*)) \mathbb P(d\tilde{\bf
z}^*) \mathbb P(d\hat T_{1:\ell'}(\tilde{\bf z}^*))] d\ell'
\end{split}
\end{equation*}
Therefore, it suffices to prove that
\begin{equation}\label{boilsdown}
\iiint h(\tilde{\vect\theta}^*) \alpha(\hat T_{1:L}^{-1}(\tilde{\bf z}^*), \tilde{\bf z}^*) \mathbb P(d\hat T_{1:L}^{-1}(\tilde{\bf z}^*)) \mathbb P(d\hat T_{1+\ell':L}^{-1}(\tilde{\bf z}^*)) d\ell'
= \iiint h(\tilde{\vect\theta}^*) \alpha(\tilde{\bf z}^*,\hat T_{1:L}(\tilde{\bf z}^*)) \mathbb P(d\tilde{\bf z}^*) \mathbb P(d\hat T_{1:\ell'}(\tilde{\bf z}^*)) d\ell'
\end{equation}

Based on its construction, $\hat T_{\ell}$ is time reversible for all $\ell$.
Denote the involution $\nu: (\tilde{\vect\theta},\tilde{\bf v})\mapsto
(\tilde{\vect\theta},-\tilde{\bf v})$. We have $\hat T^{-1}_{\ell}(\tilde{\bf
z}^*)=\nu\hat T_{\ell}\nu(\tilde{\bf z}^*)$.
Further, because $E$ is quadratic in $\tilde{\bf v}$,  we have $H(\nu(\tilde{\bf z}))=H(\tilde{\bf
z})$. Therefore $\alpha(\nu(\tilde{\bf z}),\tilde{\bf z}')=\alpha(\tilde{\bf z},\nu(\tilde{\bf z}'))$.
Then the left hand side of \eqref{boilsdown} becomes
\begin{equation*}
\begin{split}
\textrm{LHS} & = \iiint h(\tilde{\vect\theta}^*) \alpha(\nu\hat
T_{1:L}\nu(\tilde{\bf z}^*), \tilde{\bf z}^*) \mathbb P(d\nu\hat
T_{1:L}\nu(\tilde{\bf z}^*)) \mathbb P(d\nu\hat T_{1+\ell':L}\nu(\tilde{\bf
z}^*)) d\ell'\\
& = \iiint h(\tilde{\vect\theta}^*) \alpha(\hat
T_{1:L}\nu(\tilde{\bf z}^*), \nu(\tilde{\bf z}^*)) \mathbb P(d\hat
T_{1:L}\nu(\tilde{\bf z}^*)) \mathbb P(d\hat T_{1+\ell':L}\nu(\tilde{\bf
z}^*)) d\ell'\\
& \overset{\nu(\tilde{\bf z}^*)\mapsto\tilde{\bf z}^*}= \iiint
h(\tilde{\vect\theta}^*) \alpha(\hat T_{1:L}(\tilde{\bf z}^*), \tilde{\bf z}^*) \mathbb P(d\hat
T_{1:L}(\tilde{\bf z}^*)) \mathbb P(d\hat T_{1+\ell':L}(\tilde{\bf
z}^*)) d\ell'
\end{split}
\end{equation*}
On the other hand, by the detailed balance condition \eqref{jtDB}, the right hand
side of \eqref{boilsdown} becomes
\begin{equation*}
\textrm{RHS}
 = \iiint h(\tilde{\vect\theta}^*) \alpha(\hat
T_{1:L}(\tilde{\bf z}^*), \tilde{\bf z}^*) \mathbb P(d\hat
T_{1:L}(\tilde{\bf z}^*)) \mathbb P(d\hat T_{1:\ell'-1}(\tilde{\bf
z}^*)) d\ell'
\end{equation*}
Note that $\mathbb P(d\hat T_{1+\ell':L}(\tilde{\bf z}^*)) = \mathbb P(d\hat T_{1:\ell'-1}(\tilde{\bf
z}^*))$ since both $\hat T_{1+\ell':L}$ and $\hat T_{1:\ell'-1}$ are leapfrog
steps of standard HMC (no jump). The difference in the numbers of leapfrog steps
(i.e., $L-\ell'$ and $\ell'-1$) does not affect the stationarity since the number of leapfrog steps can be randomized in HMC \citep{neal10}. Therefore, we have $\textrm{LHS}=\textrm{RHS}$, which proves \eqref{boilsdown}.
\end{proof}


\begin{algorithm}[htpb]
\caption{Wormhole Hamilton Monte Carlo (WHMC)}
\label{Alg:WHMC}
\begin{algorithmic}
\STATE Prepare the modes ${\vect\theta}^{*}_{k}, \;k=1,\cdots K$
\STATE Set $\tilde{\vect\theta}^{(1)} = \textrm{current}\; \tilde{\vect\theta}$
\STATE Sample velocity $\tilde{\bf v}^{(1)}\sim \mathcal N(0,{\bf I}_{D+1})$
\STATE Calculate ${\bf E}(\tilde{\vect\theta}^{(1)},\tilde{\bf v}^{(1)}) = U(\tilde{\vect\theta}^{(1)}) + K(\tilde{\bf v}^{(1)})$
\STATE Set $\Delta \log\det = 0$, $\Delta E= 0 $, $\mathrm{Jumped} = \mathrm{false}$. 
\FOR{$n=1$ to $L$}
\STATE $\tilde{\bf v}^{(\ell+\frac{1}{2})} = \tilde{\bf v}^{(\ell)} - \frac{e}{2} \nabla_{\tilde{\vect\theta}} U(\tilde{\vect\theta}^{(\ell)})$
\IF{$\mathrm{Jumped}$}
\STATE $\tilde{\vect\theta}^{(\ell+1)} = \tilde{\vect\theta}^{(\ell)} + e \tilde{\bf v}^{(\ell+\frac{1}{2})}$
\ELSE
\STATE Find the closest mode $\tilde{\vect\theta}^{*}_0$ and build a
network connecting it to all modes $\tilde{\vect\theta}^{*}_{k}, \;k=1,\cdots K$ in the opposite world
\FOR{$m=1$ to $M$}
\STATE Calculate $\mathfrak m_k(\hat{\tilde{\vect\theta}}^{(m)}), k=1,\cdots K$
\STATE Sample $u\sim \mathrm{Unif}(0,1)$
\IF{$u<1-\sum_k \mathfrak m_k(\hat{\tilde{\vect\theta}}^{(m)})$}
\STATE Set ${\bf f}(\hat{\tilde{\vect\theta}}^{(m)}, \tilde{\bf v}^{(\ell+\frac{1}{2})}) = \tilde{\bf v}^{(\ell+\frac{1}{2})}$
\ELSE
\STATE Choose one of the $k$ wormholes according to probability $\{\mathfrak m_k/\sum_{k'} \mathfrak m_{k'}\}$
 and set ${\bf f}(\hat{\tilde{\vect\theta}}^{(m)}, \tilde{\bf
v}^{(\ell+\frac{1}{2})}) = 2(\tilde{\vect\theta}^{*}_{k}-\hat{\tilde{\vect\theta}}^{(m)})/e$
\ENDIF
\STATE $\hat{\tilde{\vect\theta}}^{(m+1)} = \tilde{\vect\theta}^{(\ell)} + \frac{e}{2}[{\bf f}(\hat{\tilde{\vect\theta}}^{(m)}, \tilde{\bf v}^{(\ell+\frac{1}{2})}) + {\bf f}(\tilde{\vect\theta}^{(\ell)}, \tilde{\bf v}^{(\ell+\frac{1}{2})})]$
\ENDFOR
\STATE $\tilde{\vect\theta}^{(\ell+1)} = \hat{\tilde{\vect\theta}}^{(M+1)}$
\ENDIF
\STATE $\tilde{\bf v}^{(\ell+1)} = \tilde{\bf v}^{(\ell+\frac{1}{2})} - \frac{e}{2} \nabla_{\tilde{\vect\theta}} U(\tilde{\vect\theta}^{(\ell+1)})$
\STATE If a jump has occurred, set $\mathrm{Jumped}=\mathrm{true}$ and calculate energy gap $\Delta E$.
\ENDFOR
\STATE Calculate ${\bf E}(\tilde{\vect\theta}^{(L+1)},\tilde{\bf v}^{(L+1)}) = U(\tilde{\vect\theta}^{(L+1)}) + K(\tilde{\bf v}^{(L+1)})$
\STATE $p = \exp\{-{\bf E}(\tilde{\vect\theta}^{(L+1)},\tilde{\bf v}^{(L+1)})+{\bf E}(\tilde{\vect\theta}^{(1)},\tilde{\bf v}^{(1)})  + \Delta E\}$
\STATE Accept or reject the proposal $(\tilde{\vect\theta}^{(L+1)},\tilde{\bf v}^{(L+1)})$ according to $p$
\end{algorithmic}
\end{algorithm}

\begin{algorithm}[h]
\caption{Regeneration in Wormhole Hamiltonian Monte Carlo}
\label{Alg:RWHMC}
\begin{algorithmic}
\STATE Initially search modes $\hat{\vect\theta}_{1},\cdots, \hat{\vect\theta}_{k}$
\FOR{$n=1$ to $L$}
\STATE Sample $\tilde{\vect\theta}=(\vect\theta,\theta_{D+1})$ as the current state according to WHMC (algorithm \ref{Alg:WHMC}).
\STATE Fit a mixture of Gaussians $q(\vect\theta)$ with known modes, Hessians and relative weights. Propose ${\vect\theta}^{*}\sim q(\cdot)$ and accept it with probability $\alpha =
\min\left\{1,\frac{\pi({\vect\theta}^{*})/q({\vect\theta}^{*})}{\pi(\vect\theta)/q(\vect\theta)}\right\}$.
\IF{${\vect\theta}^{*}$ accepted}
\STATE Determine if ${\vect\theta}^{*}$ is a regeneration using (9)-(12) with
${\vect\theta}_{t}=\vect\theta$ and ${\vect\theta}_{t+1}={\vect\theta}^{*}$.
\IF{Regeneration occurs}
\STATE Search new modes by minimizing $U_{\bf r}(\vect\theta,T)$; if new modes are discovered, update the mode library, wormhole network, and $q(\vect\theta)$.\\
\STATE Discard ${\vect\theta}^{*}$, sample ${\vect\theta}^{(\ell+1)}\sim Q(\cdot)$ as in (12) using rejection sampling.
\ELSE
\STATE Set ${\vect\theta}^{(\ell+1)} = {\vect\theta}^{*}$.
\ENDIF
\ELSE
\STATE Set ${\vect\theta}^{(\ell+1)} = \tilde{\vect\theta}$.
\ENDIF
\ENDFOR
\end{algorithmic}
\end{algorithm}

\end{document}